\definecolor{DarkGreen}{rgb}{0.2,0.6,0.2}
\definecolor{purple}{rgb}{0.6,0.3,0.8}
\def\d{\mathrm{d}}
\def\laweq{\buildrel \mathrm{d} \over =}
\newcommand{\E}{\mathbb{E}}
\newcommand{\R}{\mathbb{R}}
\newcommand{\N}{\mathbb{N}}
\newcommand{\p}{\mathbb{P}}
\newcommand{\X}{\mathcal{X}}
\renewcommand{\L}{\mathcal{L}}
\newcommand{\id}{\mathds{1}}
\renewcommand{\ge}{\geqslant}
\renewcommand{\le}{\leqslant}
\renewcommand{\geq}{\geqslant}
\renewcommand{\leq}{\leqslant}
\renewcommand{\epsilon}{\varepsilon}
\theoremstyle{plain}
\newtheorem{theorem}{Theorem}[section]
\newtheorem{corollary}[theorem]{Corollary}
\newtheorem{lemma}[theorem]{Lemma}
\newtheorem{proposition}[theorem]{Proposition}
\theoremstyle{definition} 
\newtheorem{example}[theorem]{Example}
\theoremstyle{remark}
\newtheorem{remark}[theorem]{Remark}
\numberwithin{equation}{section} 
\newcommand{\VaR}{\mathrm{VaR}}
\newcommand{\ES}{\mathrm{ES}}
\title{Coherent risk measures and uniform integrability} 
\author{Muqiao Huang\thanks%
  {Department of Statistics and Actuarial Science,
  University of Waterloo, Canada.
 \href{mailto:m5huang@uwaterloo.ca}{m5huang@uwaterloo.ca}.} \and Ruodu Wang\thanks%
  {Department of Statistics and Actuarial Science,
  University of Waterloo, Canada.
 \href{mailto:wang@uwaterloo.ca}{wang@uwaterloo.ca}.}}
\begin{document}

\maketitle

\begin{abstract}  
 
We establish a profound connection between coherent risk measures, a prominent object in quantitative finance, and uniform integrability, a fundamental concept in probability theory.
Instead of working with absolute values of random variables, which is convenient in studying integrability, we work directly with random losses and gains, 
which have a clear financial interpretation. 
We introduce a technical tool called the folding score of distortion risk measures. The analysis of the folding score allows us to convert some conditions on absolute values to those on losses and gains.  
As our main results, we obtain three sets of equivalent conditions for uniform integrability. In particular,
a set is uniformly integrable if and only if one can find a coherent distortion risk measure that is bounded on the set, but not finite on $L^1$. 

\textbf{Keywords}:  Expected Shortfall, distortion risk measures, folding scores, law invariance, law of large numbers

        

\end{abstract}

\section{Introduction}
 
Coherent risk measures, introduced by \cite{ADEH99}, have been a cornerstone of quantitative finance.
The development on risk measures has proven prolific in both academic research and  financial regulation; we refer to \cite{FS16} and \cite{MFE15} for general treatments on risk measures and their applications. In particular, the Value-at-Risk (VaR) and the Expected Shortfall (ES, also known as CVaR, TVaR and AVaR) are the two most important risk measures in the financial industry; see the regulatory document of \cite{BASEL19} in the banking sector. 

Our main goal is to connect coherent risk measures to uniform integrability, an old concept in probability theory, useful in many domains of analysis. As an example of financial relevance,  for a sequence of random variables that converges in distribution, e.g., estimated financial models from data that converge to the true model, uniform integrability guarantees convergence of many risk measures, including ES.
Continuity of a risk measure with respect to distributional convergence is associated with robustness of the risk measure by 
\cite{CDS10}.  See \cite{KSZ14} and \cite{ESW22} for recent developments on the robustness of risk measures. 
Uniform integrability is a useful condition in many financial models, especially in the context of martingales; see e.g., \cite{KOZ18} and \cite{BDD20}.

We will study conditions 
on values of a coherent risk measure applied to random variables in a set $\mathcal S$ that characterize 
uniform integrability of $\mathcal S$.  
There is a subtle difference for conditions typically used in probability theory and those in the literature of risk measures. 
Due to the definition of uniform integrability, 
it is conventional to consider conditions on $|X|$ for $X\in \mathcal S$.
Suppose that the random variable $X$  represents a financial position (an asset), with its positive realized values representing losses and negative ones representing gains, a convention used by \cite{MFE15} (a sign change from \cite{FS16}).
For a risk measure $\rho$, the value $\rho(X)$ has a concrete interpretation as regulatory capital requirement for a long position in the asset.
Similarly, $\rho(-X)$ also has a concrete meaning, as the regulatory capital requirement for a short position in the asset.
In conic finance (see \cite{MC10}),  $\rho(-X)$ can also represent the price of the asset with payoff $-X$. 
In an insurance setting, $\rho(X)$   would represent the price of the random loss  $X$.  
On the other hand, $\rho(|X|)$, which evaluates the risk of the absolute value of $X$, does not have a reasonable financial meaning. Recall that monotonicity of $\rho $ has the natural  interpretation of ``less loss is better", 
but $X\mapsto \rho(|X|)$ is not monotone and loses such meaning. 
The quantity $\rho(|X|)$ seems to be only relevant for technical analysis but it does not appear in any economic problems.

As a concrete example, in the optimal investment problem in Section \ref{sec:investment}, an upper bound on $\rho(X)$ 
naturally appears as a risk constraint (see e.g., \cite{BS01} and \cite{RU02} for using risk measures as risk constraints),
and an upper bound on $\rho(-X)$ 
naturally appears as a budget constraint (one can use a different $\rho$ when both constraints exist). 
In such a context, $\rho(|X|)$ is irrelevant.

For the above reasons, we would hope to formulate conditions using $\rho(X)$ and $\rho(-X)$, which are financially interpretable, instead of using $\rho(|X|)$.  Moreover, conditions on $\rho(X)$ and $\rho(-X)$ are easier to check for common financial models of $X$ than $|X|$; see \citet[Chapter 2]{MFE15} for some existing formulas.
To formulate conditions on $\rho(X)$ and $\rho(-X)$, we introduce a technical tool called the folding score of distortion risk measures in Section \ref{sec:general}, which quantifies the supremum of $\rho(|Y|)/\max\{\rho(Y),\rho(-Y)\}$ over $Y\in L^1$.
A useful result on the folding score (Theorem \ref{th:main}) allows us to conveniently  convert conditions on $\rho(|X|)$ to those on $\rho(X)$ and $\rho(-X)$. In particular, we show that for the class of coherent distortion risk measures, boundedness of $\rho(|X|)$ for $X\in \mathcal S$
is equivalent to 
 boundedness of $\rho(X)$
 and $\rho(-X)$, unless $\rho$ is the expectation. 

 With the help from the folding score, we obtain in Section \ref{sec:UI} three different sets of equivalent conditions for uniform integrability, using ES, coherent distortion risk measures, or law-invariant coherent risk measures. 
As a particularly convenient result, in Theorem \ref{th:3} we find that 
 $\mathcal S$ is uniformly integrable if and only if one can find a coherent distortion risk measure that is not finite on $L^1$  and bounded on $ \mathcal S$ and $-\mathcal S$.
This result closely resembles the classic characterization of uniform integrability via the de la Vall\'ee Poussin
criterion; see e.g., \citet[Theorem T22]{M66}. 
As intermediate results, we also obtain that a law-invariant coherent risk measure is finite on $L^1$ if and only if it is controlled by a constant times the expectation for positive random variables, and for coherent distortion risk measures this is equivalent to a bounded slope of its distortion function. These intermediate results are closely related to the recent results of  \cite{FW22} in the context of rearrangement-invariant Banach norms.
Further, we extend the  boundedness condition of $\rho$ on $\mathcal S$ and $-\mathcal S$ discussed above to a similar condition on two possibly different risk measures in Theorem \ref{th:3-ch}.

Three consequences of our main results on a law of large numbers, the convergence of ES values, and convergent sequences in $1$-Wasserstein distance are presented in Section \ref{sec:application}, and they direct follow from uniform integrability.  As a financial application, in  Section \ref{sec:investment} we consider  an investment problem, where a decision maker has an optimization problem subject to a risk constraint and a price constraint on the financial position, via two coherent risk measures. 
Applying  Theorem \ref{th:3-ch},  the two constraints imply uniform integrability of the set of possible positions, and this fact can be further leveraged to show the convergence of approximate optimizers to a true optimizer. 
Section \ref{sec:6} concludes the paper.

\section{Notation and preliminaries} \label{sec:2}

Let $\X$ be the set of all random variables in an atomless probability space $(\Omega,\mathcal F,\p)$. 
  Let $L^\infty$ be the set of essentially bounded random variables in $\X$ and $L^1$ be the set of integrable ones in $\X$.
  Further,
  $L_+^1$ (resp.~$L^\infty_+$) is the set of nonnegative random variables in $L^1$ (resp.~$L^\infty$).  
  Almost surely equal random variables are treated as identical. We write $X\laweq Y$ if $X$ and $Y$ have the same distribution.
  We identify constant random variables with elements in $\R$. 
 Denote by $x\vee y =\max\{x,y\}$
 and 
 $x\wedge y =\min\{x,y\}$ for real values $x$ and $y$.

A risk measure is a functional $\rho:\L\to (-\infty,\infty]$ that is finite on $\R$,   where the domain $\L$ is a convex cone of random variables in $\X$ containing $\R$; examples of $\L$ are $\X$, $L^1$ or $L^\infty$. 
We take the interpretation 
that a random variable in $\L$ represents loss/profit of a financial position (following e.g., \cite{MFE15}). 
A \emph{coherent risk measure} is a risk measure $\rho $ satisfying  the  following four properties (called axioms in the literature).
\begin{itemize} 
    \item[] Monotonicity: $\rho(X)\leq \rho(Y)$ for all $X,Y\in \L$ with $X\leq Y$.
    \item[] Cash invariance: $\rho(X+c)=\rho(X)+c$ for all $X\in  \L$ and $c\in\mathbb{R}$.
    \item[] Convexity: $\rho(\lambda X+(1-\lambda) Y)\leq \lambda \rho(X)+(1-\lambda)  \rho(Y)$ for all $X,Y\in  \L$ and $\lambda \in (0,1)$.
    \item[] Positive homogeneity: $\rho(\lambda X)=\lambda \rho(X)$ for all $X\in  \L$ and $\lambda > 0$.
\end{itemize}
Moreover, a \emph{convex risk measure} is a risk measure that satisfies the first three properties in the above list (\cite{FS02} and \cite{FR02}), and a \emph{monetary risk measure} is a risk measure that satisfies the first two properties in the above list. 
Coherent risk measures automatically satisfy normalization, that is, $\rho(0)=0$.
We refer to \cite{FS16} for interpretations of these properties, which are nowadays standard in the field. 
Two further properties below are also standard and useful in this paper. 
\begin{itemize} 
    \item[] Law invariance: $\rho(X)= \rho(Y)$ for all $X,Y\in \L$ with $X\laweq Y$.
 \item[] Lower semicontinuity (for $L^1$ convergence): $\liminf_{n\to \infty} \rho(X_n)\ge \rho(X)$ if $X_n\to X$ in $L^1$  as $n\to \infty$. 
 \item[] Comonotonic additivity: $\rho(X)+\rho(Y)=\rho(X+Y)$ whenever $X, Y \in \mathcal L$ are comonotonic, i.e., $X$ and $Y$ are both increasing functions of some random variable $U$.
\end{itemize}
Lower semicontinuity in the paper always refers to $L^1$-convergence (omitted), but this does not require $\mathcal L=L^1$. 
Two popular classes of  monetary risk measures  in financial practice   are  VaR
and  ES,
 which are law-invariant, lower semicontinuous, and defined on $\X$. 
VaR  at level $p\in (0,1)$ is defined as
$$
\VaR_p(X)= \inf \{x\in \R: \p(X\le x)\ge p\},~~~X\in \X,
$$ 
and ES at level $p\in (0,1)$ as 
  $$
\ES_p(X)=\frac 1 {1-p} \int_p^1 \VaR_q(X) \d q ,~~~X\in \X.
$$  
Note that $\ES_p(X)$  is  finite if and only if $\E[X_+]<\infty$.   Either VaR or ES can be characterized by a few   axioms; see \cite{KP16} and \cite{WZ21}.

Both VaR and ES belong to the more general class of distortion risk measures, which plays an important role in this paper. 
Let $\mathcal D$ be the set of 
 functions $h:[0,1]\to[0,1]$ that are increasing (in the non-strict sense)  with $h(0)=0$ and $h(1)=1$. 
   For $h\in \mathcal D$, the Choquet integral of a random variable $X$ with respect to $h\circ \p$ is given by
$$\int X \d (h\circ \p) = \int_0^\infty h\big(\p(X>x)\big)\d x + \int_{-\infty}^0 \Big(h\big(\p(X>x)\big)-1\Big)\d x,
$$
which may be undefined or infinite.   If $X$ is nonnegative, then $\mathbb{P}(X > x)=1$ for all $x \in (-\infty, 0]$ and we are left with only the first integral.  
 For $h\in \mathcal D$,  the \emph{distortion risk measure} $\rho_h$ is defined by   
 $$
   \rho_h(X) = \int X \d (h\circ \p),~~~~X\in \L, 
 $$
   where $\L$ is chosen such that $\int X \d (h\circ \p)$ takes values in $(-\infty,\infty]$, i.e., the second term in the Choquet integral is finite. 
   The function $h$ is called the \emph{distortion functions} of $\rho_h$.
The following properties of distortion risk measures and Choquet integrals are well known (see \cite{WWW20} for a summary) and will be used frequently in the paper.  
\begin{enumerate}[(a)]
\item  The Choquet integral $\int X \d (h\circ \p)$ is  well-defined and finite on $L^\infty$. Therefore, we will always assume $\mathcal L\supseteq L^\infty$ below implicitly. \item  The class of distortion risk measures $\rho_h$ on $L^\infty$ is precisely the class of law-invariant, monotone, and comonotonic-additive mappings $\rho:L^\infty \to \R$ with $\rho(1)=1$.
 \item  If  $h$ is left-continuous, then we can write 
   $$
   \int X\d (h\circ\p) =\int_0^1   \VaR_{1-q}(X) \d h(q), \mbox{~~~for each $X$ such that $  \int X\d (h\circ\p)$ is well-defined}. 
   $$  
\item  If $h$ is concave, then we can choose $L^1\subseteq \mathcal L$   and $\rho_h$ is lower semicontinuous. 
\item  The risk measure  
   $\rho_h$   is coherent  if and only if $h$ is concave. 
 \item  If $h$ is    the identity on $[0,1]$, then $\rho_h=\E$; hence $\rho_h \geq \E$ if $h $ is concave. \label{f}
 \item  If $h(t) = (t/(1-p))\wedge 1$ for $p\in (0,1)$, then $\rho_h=\ES_p$, which is a coherent risk measure.  
   \end{enumerate}
   For all coherent distortion risk measures   in this paper, their 
  domain will be chosen  as $L^1$, although we initially formulated $\ES_p$ on the larger set $\X$.

\section{Folding score} 
  \label{sec:general}   
  

In this section, we introduce a technical tool and provide some  results on distortion risk measures, which will be used in the proofs of our main results in Section \ref{sec:UI} on uniform integrability. 

Let $\rho$ be a risk measure on  $\L$. 
A new quantity $s_\rho$ will be useful in our later analysis, which is defined by 
   $$s_{\rho} = 
   \sup_{X\in \L} 
   \frac{\rho(|X|)}{    |\rho(X)|\vee |\rho(-X)|}= \sup_{X\in \L} s_\rho(X) ,\mbox{~~~where } s_\rho(X)=  \frac{\rho(|X|)}{   | \rho(X)|\vee |\rho(-X)|},$$
   where we set $\infty/\infty=1$, $0/0=1$, and $1/0=\infty$. 
The quantity $s_\rho$  will be called 
   the \emph{folding score} of $\rho$.
   To explain the name, recall that the distribution of $|X|$ is the folded distribution of $X$. 
Intuitively, $s_\rho$ measures how large $\rho(|X|)$, the risk measure $\rho$ applied to the ``folded" random variable $X$, can be relative to $|\rho(X)|$ and $|\rho(-X)|$. 
Clearly, $s_\rho\ge 1$ for  common risk measures, e.g., those satisfying $\rho(x)=x$ for $x\in \R$. 
We are particularly interested in whether $s_\rho$ is finite.    
   If $\rho=\E^Q$ for a probability $Q$ (absolutely continuous with respect to $\p$), then $s_\rho$ is infinite, because the denominator can be $0$ while the numerator  is positive;  for instance, this happens for any non-degenerate  random variable symmetric about $0$.
Thus, for a linear functional $\rho=\E^Q$, we have $s_\rho=\infty$.

If $\rho$ is a normalized convex risk measure, then $\rho(X)+\rho(-X)\ge 2\rho(0)=0$ for all $X\in \L$. Therefore, we can remove the absolute values in the denominator in the definition of $s_\rho$ and $s_\rho(X)$, that is,
   $$s_{\rho} = 
   \sup_{X\in \L} 
   \frac{\rho(|X|)}{     \rho(X) \vee  \rho(-X) }\mbox{~~~and~~~} s_\rho(X)=  \frac{\rho(|X|)}{    \rho(X)\vee \rho(-X)}.$$
   We will conveniently work with this formulation since most of our results are obtained for coherent risk measures. 
   
   
   


 Our main result in this section is an upper bound on $s_\rho$ for a coherent distortion risk measure $\rho$. In particular, we would like to understand whether $\rho=\E$ is the only case where $s_\rho=\infty$ among distortion risk measures.  This turns out to be true.

   \begin{theorem}\label{th:main}
   Suppose that $\rho$ is a coherent distortion risk measure on $L^1$ with distortion function 
 $h\in \mathcal D$. If $h$ is not the identity, then  
\begin{align}\label{eq:thmain}
   1\le s_{\rho} \le   \frac{h(1/2)+1/2}{h(1/2)-1/2}  < \infty.
   \end{align}
   As a consequence, 
   $s_{\rho}=\infty$ if and only if $\rho$ is the mean. 
   \end{theorem}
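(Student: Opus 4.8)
The plan is to first clear away the easy parts and then reduce the theorem to a single clean inequality about $\rho_h$ evaluated at the negative part of $X$. For the easy parts: since $h\in\mathcal D$ is concave and passes through $(0,0)$ and $(1,1)$, concavity forces $h(t)\ge t$ on $[0,1]$, and if $h$ is not the identity then $h(t)>t$ for every $t\in(0,1)$; in particular $h(1/2)>1/2$, so the middle expression in \eqref{eq:thmain} is a genuine finite number. The bound $s_\rho\ge 1$ is immediate from testing $X\equiv 1$. For the final ``as a consequence'' sentence: if $h$ is the identity then $\rho_h=\E$ by property (f), and $s_{\E}=\infty$ exactly as noted in the text (take any non-degenerate $X\laweq -X$, so that $\E[|X|]>0$ while $\E[X]\vee\E[-X]=0$); conversely, if $h$ is not the identity, the main bound gives $s_\rho<\infty$. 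So everything rests on the upper bound in \eqref{eq:thmain}.

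For the reduction, note that by positive homogeneity the ratio $s_\rho(X)$ is scale invariant, and since $|{-X}|=|X|$ and the denominator is symmetric in $X$ and $-X$, we may assume without loss of generality that $\rho_h(X)\ge\rho_h(-X)$. Because $\rho_h$ is a normalized convex risk measure, $\rho_h(X)+\rho_h(-X)\ge 2\rho_h(0)=0$, so in fact $\rho_h(X)\ge 0$, and (dealing with the degenerate case $\rho_h(X)=0$ separately, where one checks $X=0$ using that $h$ is not the identity) we may assume $\rho_h(X)>0$. Using the pointwise identity $|X|=X+2X^-$ with $X^-:=\max(-X,0)$, together with subadditivity and positive homogeneity of the coherent $\rho_h$, one gets $\rho_h(|X|)\le\rho_h(X)+2\rho_h(X^-)$, hence $s_\rho(X)=\rho_h(|X|)/\rho_h(X)\le 1+2\rho_h(X^-)/\rho_h(X)$. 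Since $1+\tfrac{2}{2h(1/2)-1}=\tfrac{h(1/2)+1/2}{h(1/2)-1/2}$, it therefore suffices to prove the single inequality $\rho_h(X^-)\le\rho_h(X)/(2h(1/2)-1)$.

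For this key inequality, write $\bar F_Y(x)=\p(Y>x)$. From the Choquet representation, $\rho_h(X^-)=\rho_h((-X)^+)=\int_0^\infty h(\bar F_{-X}(x))\,dx$, while $\rho_h(-X)=\int_0^\infty h(\bar F_{-X}(x))\,dx+\int_{-\infty}^0\big(h(\bar F_{-X}(x))-1\big)dx$; subtracting gives $\rho_h(X^-)=\rho_h(-X)+\int_{-\infty}^{0}\big(1-h(\bar F_{-X}(x))\big)\,dx$. The two tools I would use to control the last integral are: (i) the chord estimate $1-h(t)\le 2\big(1-h(1/2)\big)(1-t)$ for $t\in[1/2,1]$, which holds because the graph of the concave $h$ lies above the chord joining $(1/2,h(1/2))$ and $(1,1)$; and (ii) splitting the integral at $x^\ast=\VaR_{1/2}(-X)$, below which $\bar F_{-X}\ge 1/2$, so that (i) applies on that range. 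On the range where (i) applies, the resulting integral is bounded by a constant times an expectation such as $\E[(-X)^-]$, which one converts back into values of $\rho_h$ using $\rho_h\ge\E$ (property (f)); the residual pieces are then absorbed using the standing hypothesis $\rho_h(X)\ge\rho_h(-X)$ together with the companion bound for $X^+$ obtained by running the same argument on $-X$. Collecting these estimates produces a small linear system relating $\rho_h(X^-),\rho_h(X^+),\rho_h(X),\rho_h(-X)$, which, solved, yields $\rho_h(X^-)\le\rho_h(X)/(2h(1/2)-1)$.

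The hard part, I expect, will be that last step: extracting the \emph{precise} constant $2h(1/2)-1$ from the chord estimate and the hypothesis $\rho_h(X)\ge\rho_h(-X)$, rather than the cruder finite constant (something like $\tfrac{5-2h(1/2)}{2h(1/2)-1}$) that falls out of the most naive combination of the inequalities. An equivalent and perhaps tidier route is to work throughout with the quantile function $q\mapsto\VaR_q(X)$ and the probability measure $\nu$ on $(0,1]$ that represents $\rho_h$ (so $\rho_h(X)=\int\VaR_q(X)\,\nu(\mathrm dq)$ after reparametrization): here $\nu$ has an increasing density on $(0,1)$ (coherence) and $\nu$ assigns mass exactly $h(1/2)$ to $(1/2,1]$, so the split at the level $1/2$ and the number $h(1/2)$ enter the bookkeeping transparently, and the inequality for $\rho_h(X^-)$ becomes a comparison of $\nu$ with its reflection on the two halves of $(0,1)$. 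Atoms — in particular a possible mass of $X$ at $0$ — cause no real difficulty: one either carries the one-sided limits of $\bar F$ through all the formulas, or first reduces to continuously distributed $X$ by a standard approximation on the atomless space.
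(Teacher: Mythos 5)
Your preliminary steps are all fine: $h(1/2)>1/2$ gives finiteness of the middle term, testing $X\equiv1$ gives $s_\rho\ge1$, the ``as a consequence'' sentence follows, and the reduction $\rho_h(|X|)\le\rho_h(X)+2\rho_h(X^-)$ via $|X|=X+2X^-$ and subadditivity is valid. The fatal problem is that the single inequality you reduce everything to, namely $\rho_h(X^-)\le\rho_h(X)/(2h(1/2)-1)$ under the normalization $\rho_h(X)\ge\rho_h(-X)$, is \emph{false}. Take $\rho_h=\ES_{1/2}$, so $h(t)=(2t)\wedge1$, $h(1/2)=1$, and your target reads $\rho_h(X^-)\le\rho_h(X)$. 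Let $X$ take the value $3$ with probability $0.7$ and $-6$ with probability $0.3$. Then $\ES_{1/2}(X)=3$ and $\ES_{1/2}(-X)=2(0.3\cdot6-0.2\cdot3)=2.4$, so the normalization holds; but $X^-$ equals $6$ with probability $0.3$ and $0$ otherwise, whence $\ES_{1/2}(X^-)=2\cdot0.3\cdot6=3.6>3$. No refinement of the chord estimate or of the ``linear system'' can rescue a false endpoint. (The theorem itself is safe here: $\ES_{1/2}(|X|)=4.8$ and $s_{\rho_h}(X)=1.6\le3$; it is your intermediate bound $1+2\rho_h(X^-)/\rho_h(X)=3.4$ that overshoots the constant $3$.)

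The structural reason your route cannot work is that the decomposition $|X|=X+2X^-$ is strictly lossier than what the constant requires. With $A=\int_0^\infty h(\p(X>x))\,\d x$, $B=\int_0^\infty h(\p(-X>x))\,\d x$, $C=\int_{-\infty}^0(1-h(\p(X>x)))\,\d x$, $D=\int_{-\infty}^0(1-h(\p(-X>x)))\,\d x$, one has $\rho_h(X)=A-C$, $\rho_h(-X)=B-D$ and $\rho_h(X^-)=B$, so your bound is $\rho_h(|X|)\le(A-C)+2B$, which always dominates the bound $\rho_h(|X|)\le A+B$ obtained by applying subadditivity of $h$ directly to $\p(|X|>x)=\p(X>x)+\p(X<-x)$ (indeed $(A-C)+2B-(A+B)=B-C\ge0$). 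The paper works with $A+B$, controls $C\le aB$ and $D\le bA$ where $a=g(z)/h(z)$, $b=g(1-z)/h(1-z)$, $g(t)=1-h(1-t)$, $z=\p(X<0)$ (this is essentially your chord estimate), and then maximizes $(A+B)/((A-aB)\vee(B-bA))$ over $A,B>0$ by an elementary lemma, using crucially that $a$ and $b$ cannot both exceed $g(1/2)/h(1/2)$. It is this symmetric, two-sided use of the maximum in the denominator, rather than a one-sided reduction to $X^-$, that produces the constant $(h(1/2)+1/2)/(h(1/2)-1/2)$.
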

   Before proving the theorem, we first present a simple algebraic lemma.
   \begin{lemma}\label{lem:algebraic}
   For $a,b\in [0,1]$, we have 
   $$
   \max_{x,y>0} \frac{x+y}{  (x-ay)\vee( y-bx)} = \frac{2+a+b}{1-ab}.
   $$ 
   \end{lemma}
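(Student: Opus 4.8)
The plan is to exploit that both the numerator $x+y$ and the denominator $(x-ay)\vee(y-bx)$ are positively homogeneous of degree one in $(x,y)$, so the ratio depends only on the direction of $(x,y)$. Writing $s=x+y>0$ and $t=x/s\in(0,1)$, so that $x=ts$ and $y=(1-t)s$, a direct substitution turns the ratio into $1/g(t)$ with
$$g(t)=\max\big\{(1+a)t-a,\ 1-(1+b)t\big\}.$$
Hence it suffices to compute $\min_{t\in(0,1)}g(t)$ and take the reciprocal.

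Next I would observe that $g$ is the pointwise maximum of the affine function $f_1(t)=(1+a)t-a$, which is strictly increasing since $a\ge 0$, and the affine function $f_2(t)=1-(1+b)t$, which is strictly decreasing since $b\ge 0$. Thus $g$ is convex and piecewise affine, strictly decreasing up to the crossing point $t^*$ of $f_1$ and $f_2$ and strictly increasing afterwards, so its minimum is attained exactly at $t^*$. Solving $(1+a)t-a=1-(1+b)t$ yields $t^*=(1+a)/(2+a+b)$, which lies in $(0,1)$ because $0<1+a<2+a+b$, and then a one-line computation gives $g(t^*)=1-(1+b)t^*=(1-ab)/(2+a+b)$.

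For $ab<1$ this gives $\min_{t\in(0,1)}g(t)=(1-ab)/(2+a+b)>0$, hence
$$\max_{x,y>0}\frac{x+y}{(x-ay)\vee(y-bx)}=\frac{1}{g(t^*)}=\frac{2+a+b}{1-ab},$$
the maximum being attained along the ray with $x-ay=y-bx$, i.e.\ $y/x=(1+b)/(1+a)$. The only degenerate case is $a=b=1$, where $g(1/2)=0$ and the left-hand side equals $+\infty$, consistent with the convention that a positive number over $0$ is $+\infty$. I do not foresee a genuine obstacle here: the two things that need a little care are checking that the crossing point $t^*$ really lies in the open interval $(0,1)$, so that the minimum of $g$ is the interior value $g(t^*)$ rather than a boundary value, and handling the degenerate case $1-ab=0$ so that the displayed identity is read with the stated conventions.
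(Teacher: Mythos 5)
Your proof is correct and takes essentially the same route as the paper: both reduce the two-variable ratio to a one-variable problem by homogeneity (the paper sets $z=y/x$, you normalize $x+y$), locate the crossing point of the two affine pieces, and argue monotonicity on either side to identify the optimum, with the same separate treatment of the degenerate case $a=b=1$. The only cosmetic difference is that you phrase the one-variable step as minimizing a convex max of two affine functions rather than maximizing the ratio directly.
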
 
   \begin{proof}
With the substitution $z=y/x$,
    $$ \max_{x,y>0} \frac{x+y}{  (x-ay)\vee( y-bx)} 
     =   \max_{z>0} \frac{1+z}{  (1-az)\vee( z-b)}. $$ 
It suffices to maximize $f(z)$ defined by $f(z)= ({1+z} ) /( {  (1-az)\vee( z-b)})$ for $z>0$.      
   If $ab=1$ (i.e., $a=b=1$), then by taking $z=1$ we have $f(z)=\infty$. 
   Next, suppose $ab<1$. 
     Let $z_0=(1+b)/(1+a)$. Note that $ab<1$ implies $z_0>b$, and we have
 $$    f(z_0)= \frac{2+a+b}{1-ab}.$$
 If $z\le z_0$  then $f(z) = (1+z)/(1-az)$, which is increasing in $z$, yielding $f(z)\le f(z_0)$.
 If $z\ge z_0$, then $f(z) = (1+z)/( z-b) = (1-b)/(z-b) +1$, which is decreasing in $z$, yielding $f(z)\le f(z_0)$.
 Therefore, $z=z_0$ maximizes $f(z)$. 
   \end{proof}
   
   \begin{proof}[Proof of Theorem \ref{th:main}]
 Note that if $h$ is concave and not the identity, then $h(q)>q$ for all $q\in (0,1)$, and hence the third inequality in \eqref{eq:thmain} is automatic. 
 The first inequality $1\le s_\rho$ is also automatic. 
 Below, we show the remaining inequality in \eqref{eq:thmain}, that is, $$
\frac{\rho_h(|X|)}{\rho_h(X)\vee \rho_h(-X)} \le   \frac{h(1/2)+1/2}{h(1/2)-1/2}  \mbox{~~~for all $X\in L^1$.}
 $$
 
   Take $X\in L^1$. If $X\ge 0$ or $X\le 0$, then 
      $
    {\rho_h(|X|)} = { \rho_h(X)\vee \rho_h(-X)},
   $
   and the desired inequality   holds. For a general $X$, define the following quantities, 
\begin{align*} 
A &= \int_0^\infty h(\p(X>x))\d x,\\ 
B &= \int_0^\infty h(\p(-X>x))\d x = \int_{-\infty}^0 h(\p(X<x))\d x,\\ 
C &= \int_{-\infty}^0  ( 1-h(\p(X>x)) ) \d x, \\ 
D &= \int_{-\infty}^0 ( 1-h(\p(-X>x)) ) \d x = \int_0^\infty  ( 1-h(\p(X<x))) \d x  .
\end{align*}  
Note that $A, B, C, D$ are all nonnegative, and $C$ and $D$ are bounded from above.
Clearly,
$$
\rho_h(X) = A-C~~\mbox{and}~~\rho_h(-X) = B-D.$$
Moreover,  concavity of $h$ implies $h(s+t)=h(s+t)+h(0)\le h(s)+h(t)$ for $s,t\in [0,1]$ and $s+t\le 1$.
Since $\p(|X|>x) = \p(X>x) + \p(X<-x)$ for $x\ge 0$, we have
\begin{align*}
  \rho_h(|X|) &= \int_0^\infty h\big(\p(X>x) +\p(X<-x) \big)\d x \\
  &\leq \int_0^\infty h\big(\p(X>x)\big ) \d x +  \int_0^\infty  h\big(\p(X<-x)\big)\d x = A + B.
\end{align*}
If one of $A$ and $B$ is infinite, then $\rho_h(|X|)\ge A\vee B=\infty=\rho_h(X)\vee \rho_h(-X)$, and  the desired inequality \eqref{eq:thmain} holds. Below we assume that $A$ and $B$ are finite.

Let $g\in \mathcal D$ be defined by $g(t)=1-h(1-t)$. Clearly, $g$ is convex. 
Noting that $t\mapsto g(t)/t$ is increasing and $t\mapsto h(t)/t$ is decreasing, 
we have that $g/h$ is increasing on $(0,1]$.
Therefore, using the fact that $\p(X=x)>0$ for at most countably many $x\in \R$, we have 
\begin{align*}
\frac{D}{A} &= \frac{\int_0^\infty (1-h(\p(X<x)))\d x}{\int_0^\infty h(\p(X>x))\d x}\\
&= \frac{\int_0^\infty g(\p(X >  x))\d x}{\int_0^\infty h(\p(X>x))\d x}
\le \sup_{x\ge 0}  \frac{ g(\p(X >  x))}{  h(\p(X>x)) } .
\end{align*}
Let $z = \p(X<0)$, we have 
\begin{align*}
  \frac{ g(\p(X>x))}{ h(\p(X>x))} \leq \frac{ g( \p(X> 0))}{ h(\p(X>0))} = \frac{g(1-z)}{h(1-z)}=: b,
\end{align*}
and therefore $ D/A\le b$.
Analogous arguments also yield
\begin{align*}
  \frac{C}{B}   \le \frac{ \int _{-\infty}^0 g(\p(X<x)) \d x    }{ \int _{-\infty}^0 h(\p(X<x)) \d x    } \leq \frac{g(z)}{h(z)} =: a.
\end{align*} 

Note that  both $a$ and $b$ take values in $[0,1)$ since $g$ is  convex and not the identity.
The above inequalities yield
 $$
   \frac{\rho_h(|X|)}{ \rho_h(X)\vee \rho_h(-X)} \le \frac{A+B}{(A-C)\vee (B-D)} \le \frac{A+B}{(A-aB)\vee (B-bA)}. 
   $$
   Using Lemma \ref{lem:algebraic}, we get 
    $$
   s_{\rho_h}(X) \le \frac{2+a+b}{1-ab}.
   $$
   Note that $a$ increases in $z$ and $b$ decreases in $z$. 
  Denote by $c = g(1/2)/h(1/2)$. 
  Since either $z\le 1/2$ or $z > 1/2$, we have either $a\le c$ or $b\le c$.
  This gives $a+b \le 1+ c$ and $ab\le  c$.
  Therefore, 
     $$
   s_{\rho_h}(X) \le \frac{3+c}{1-c} = \frac{3-2g(1/2)}{1-2g(1/2)}= \frac{h(1/2)+1/2}{h(1/2)-1/2}.
   $$
The desired inequality  \eqref{eq:thmain} follows.    \end{proof}

Note that in Theorem \ref{th:main}, concavity of $h$ does not exclude the possibility that $h$ is discontinuous at $0$.
The following proposition states a simple consequence of $s_\rho<\infty$. Our later results will mainly use this proposition.

 \begin{proposition} \label{th:cor1}
 Let $\mathcal S$ be a set of random variables 
 and $\rho$ be a normalized convex risk measure on $\mathcal L $  satisfying  $s_\rho<\infty$ such that $\{X,-X ,|X|\}\subseteq \mathcal L$ for $X\in \mathcal S$.  The following are equivalent.
 \begin{enumerate}[(i)] 
   \item 
   $\rho(X)$ and $\rho(-X)$ are both bounded for $X\in \mathcal S$;
   \item$\rho(|X|)$ is bounded for $X\in \mathcal S$.
 \end{enumerate}
 In particular, this equivalence holds for $\rho=\rho_h$, where
 $h\in \mathcal D$ is concave and not the identity.
 \end{proposition}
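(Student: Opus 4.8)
The plan is to prove the equivalence directly from the definition of the folding score, exploiting the hypothesis $s_\rho < \infty$ together with normalization (which, as already observed in the text, lets us drop the absolute values in the denominator since $\rho(X) + \rho(-X) \ge 2\rho(0) = 0$). The two implications are of very different character: (ii)$\Rightarrow$(i) is essentially immediate, while (i)$\Rightarrow$(ii) is where the folding score does the work.

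For (ii)$\Rightarrow$(i), suppose $\rho(|X|) \le K$ for all $X \in \mathcal S$. First observe that $\rho(|X|) \ge 0$: indeed $|X| \ge 0$, and by monotonicity and normalization $\rho(|X|) \ge \rho(0) = 0$. Next, $X \le |X|$ and $-X \le |X|$, so monotonicity gives $\rho(X) \le \rho(|X|) \le K$ and $\rho(-X) \le \rho(|X|) \le K$, an upper bound on both quantities. For the lower bound, normalization forces $\rho(X) \ge -\rho(-X) \ge -K$ and symmetrically $\rho(-X) \ge -K$. Hence $\rho(X), \rho(-X) \in [-K, K]$ for all $X \in \mathcal S$, which is (i).

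For (i)$\Rightarrow$(ii), suppose $|\rho(X)| \le K$ and $|\rho(-X)| \le K$ for all $X \in \mathcal S$. I want to bound $\rho(|X|)$. If $\rho(X) \vee \rho(-X) = 0$ for some $X$, one should note that this forces $\rho(X) = \rho(-X) = 0$ (using $\rho(X) + \rho(-X) \ge 0$), and since $|X| \ge -X$ and $|X| \ge X$ with $\rho(X) = \rho(-X) = 0$, I still need an upper bound on $\rho(|X|)$; here I would invoke that the $0/0 = 1$ convention in the definition of $s_\rho$ actually encodes $\rho(|X|) = 0$ in that degenerate case — more carefully, $s_\rho < \infty$ together with $s_\rho(X) = \rho(|X|)/(\rho(X) \vee \rho(-X))$ and the convention $1/0 = \infty$ means the numerator $\rho(|X|)$ cannot be positive when the denominator vanishes, so $\rho(|X|) \le 0$, and combined with $\rho(|X|) \ge 0$ we get $\rho(|X|) = 0$. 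In the main case where $\rho(X) \vee \rho(-X) > 0$, note this supremum is $\le K$ by hypothesis and $\ge 0$ by normalization, so from the definition of the folding score, $\rho(|X|) = s_\rho(X)\,(\rho(X) \vee \rho(-X)) \le s_\rho \cdot K$, since $s_\rho(X) \le s_\rho$. Either way $\rho(|X|) \le s_\rho K < \infty$ (using $s_\rho < \infty$), proving (ii). Finally, the ``in particular'' clause follows by applying the equivalence with $\rho = \rho_h$: by property (e) in Section~\ref{sec:2} a concave $h$ makes $\rho_h$ coherent, hence a normalized convex risk measure, and Theorem~\ref{th:main} gives $s_{\rho_h} < \infty$ precisely because $h$ is not the identity; the inclusion $\{X, -X, |X|\} \subseteq L^1$ holds for $X \in L^1$ since $L^1$ is closed under these operations.

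The main obstacle I anticipate is the bookkeeping around the degenerate cases in the definition of $s_\rho$ — in particular making the $0/0 = 1$ and $1/0 = \infty$ conventions do exactly what is needed to conclude $\rho(|X|) = 0$ when $\rho(X) \vee \rho(-X) = 0$, rather than hand-waving past it. Everything else is a short chain of monotonicity, normalization, and positive homogeneity is not even needed. One should also be slightly careful that ``bounded'' in the statement means bounded both above and below (a two-sided bound), which is why the normalization inequality $\rho(X) + \rho(-X) \ge 0$ is used twice in the (ii)$\Rightarrow$(i) direction to extract lower bounds from the upper bound on $\rho(|X|)$.
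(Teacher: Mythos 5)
Your proof is correct and follows essentially the same route as the paper: (i)$\Rightarrow$(ii) via the definition of the folding score together with $s_\rho<\infty$, and (ii)$\Rightarrow$(i) via monotonicity for the upper bounds and the normalized-convexity inequality $\rho(X)+\rho(-X)\ge 2\rho(0)=0$ for the lower bounds, with the ``in particular'' clause from Theorem~\ref{th:main}. Your extra care with the degenerate case $\rho(X)\vee\rho(-X)=0$ is a sound refinement of a step the paper leaves implicit.
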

 \begin{proof} 
\underline{(i)$\Rightarrow$(ii)}. 
This follows from the definition of $s_\rho$: Note that  
$$\infty>s_\rho \ge \sup_{X\in \mathcal S} \frac{\rho(|X|)}{\rho(X)\vee \rho(-X)}, $$
and hence boundedness of $\rho(X)\vee \rho(-X)$ implies boundedness of $\rho(|X|)$.

\underline{(ii)$\Rightarrow$(i)}. Since the risk measure $\rho$ is monotone, we have $\rho(|X|)\ge \rho(X)$ and $\rho(|X|)\ge \rho(-X)$; hence, both  $\rho(X)$ and $\rho(-X)$ are bounded from above. 
Convexity of $\rho$ gives $\rho(X)+\rho(-X)\ge 2\rho(0)=0$. This guarantees that  $\rho(X)$ and $\rho(-X)$ are bounded from below, noting that  each of  $\rho(X)$ and $\rho(-X)$ is bounded from above.  

The last statement for $\rho=\rho_h$ follows from Theorem \ref{th:main}.
 \end{proof}

 The condition 
 $\{X,-X ,|X|\}\subseteq \mathcal L$
 for $X\in \mathcal S$ is needed for $s_\rho(X)$ to be well-defined for $X\in \mathcal S$.
In most results later, we often encounter the situation $\mathcal S\subseteq L^1 \subseteq \mathcal L$, so that this condition is satisfied automatically. 

 Sharpness of the bound in Theorem \ref{th:main} 
 and its extensions to other risk measures are studied in Appendix \ref{app:folding}. 
 In Section \ref{sec:UI}, we will use a  pair of conditions involving $|X|$ and $\pm X$ similar to Proposition \ref{th:cor1} to characterize uniform integrability.

 \begin{remark}
Let $\rho$ be a law-invariant coherent risk measure on $ L^\infty$. 
If 
\begin{align}
    \label{eq:suffcond}
    \mbox{there exists a non-degenerate $X$ such that $\rho(X)=-\rho(-X)$},
\end{align}
then for  $Y=X-\rho(X)$, we have  
$\rho(Y)=0=\rho(-Y)$, but 
$\rho(|Y|)\ge \E[|Y|]>0$.
Therefore, $s_\rho=\infty$
in this case. 
Note that \eqref{eq:suffcond} is sufficient to force $\rho$ to be the mean; see \cite{CMM04}, \cite{BKMS21} and \cite{LM22} for  related results and generalizations. 
Our Theorem \ref{th:main} can be seen as a strengthening of this result for distortion risk measures, because $s_\rho=\infty$ is a weaker condition than \eqref{eq:suffcond}.
 \end{remark}

\section{Uniform integrability}\label{sec:UI}

\subsection{Characterizing uniform integrability using ES}

  A set $\mathcal S$ of random variables 
 is \emph{uniformly integrable} if 
 $$
\sup_{X\in \mathcal S} \E[|X|\id_{\{|X|>K\}}]\to0 \mbox{~as $K\to\infty$}.
 $$
In our case, the random variables in $\mathcal S$ are defined on an atomless probability space $(\Omega,\mathcal F,\p)$
, and clearly any uniformly integrable set is a subset of $L^1$.  Under this setting, uniform integrability of $\mathcal S$ has the following characterization (see \citet[Theorem T19] {M66}):  
\begin{align}\label{eq:Meyer}
\begin{aligned}
\mbox{For every $\epsilon > 0$,}&\mbox{ there exists $\delta > 0$ such that 
}  \\&\int_A |X| \d \p < \epsilon \mbox{~for all $X\in \mathcal{S} $ and $ A \in \mathcal{F}$ with $\p (A) \leq \delta$}.
\end{aligned}
\end{align}
Moreover,  we can safely replace $\p(A)\le \delta$ in \eqref{eq:Meyer} by  $\p (A) = \delta$.

We have the following characterization theorem of uniform integrability.
The equivalence of (i) and (ii) appeared in 
\citet[Lemma 7.3]{wmh20}; here we supply a shorter proof. Statement (ii) below is easier to work with in technical analysis, whereas statement (iii) has clearer meaning for financial applications.  
For this result, we treat the domain of $\ES_p$ for $p\in (0,1)$ as $\X$, thus allowing for infinite values of $\ES_p(X)$.
 
 \begin{theorem}\label{th:1}
 Let $\mathcal S$ be a set of random variables. 
 The following statements are equivalent.
 \begin{enumerate}
 [(i)]
 \item  $\mathcal S$ is uniformly integrable;
 \item 
 $  \sup_{X \in \mathcal S}(1-p)\ES_p(|X|)$ tends to $0$
   as $p\uparrow 1$; 
 \item both $\sup_{X \in \mathcal S}(1-p)\ES_p(X)  $ and $ \sup_{X \in \mathcal S}(1-p)\ES_p(-X) 
 $ tend to $0$
   as $p\uparrow 1$.
 \end{enumerate}
 \end{theorem}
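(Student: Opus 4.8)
The plan is to establish the cycle (i)$\Rightarrow$(ii)$\Rightarrow$(iii)$\Rightarrow$(i), exploiting the folding-score machinery of Section \ref{sec:general} for the step (iii)$\Rightarrow$(ii) (equivalently, to collapse the pair of conditions on $\pm X$ into the single condition on $|X|$). First I would record the elementary identity $(1-p)\ES_p(Y) = \int_p^1 \VaR_q(Y)\,\d q$ for $Y\in\X$, valid with possibly infinite values, and note that $\VaR_q(|X|) \ge \VaR_q(X)\vee\VaR_q(-X) \ge 0$ for $q$ close to $1$, which immediately gives (ii)$\Rightarrow$(iii) since $0\le (1-p)\ES_p(\pm X)\le (1-p)\ES_p(|X|)$ for $p$ large enough that $\VaR_p(\pm X)\ge 0$; a small bookkeeping remark is needed because for small $p$ the quantities $\ES_p(\pm X)$ need not be nonnegative, but taking $p$ near $1$ handles this uniformly.

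For (i)$\Leftrightarrow$(ii) I would argue directly with the tail-integral characterization \eqref{eq:Meyer}. The key observation is that $(1-p)\ES_p(|X|) = \int_p^1 \VaR_q(|X|)\,\d q$, and after the substitution this equals $\int_{\{|X| > \VaR_p(|X|)\}} |X|\,\d\p$ plus a boundary term of the form $(1-p - \p(|X|>\VaR_p(|X|)))\VaR_p(|X|)$ coming from an atom at the quantile; more cleanly, one has $(1-p)\ES_p(|X|) = \mathbb E[|X|\id_A] + \VaR_p(|X|)(1-p-\p(A))$ for a suitable event $A$ with $\p(A)\le 1-p$ containing $\{|X|>\VaR_p(|X|)\}$, and this lies between $\mathbb E[|X|\id_{\{|X|>\VaR_p(|X|)\}}]$ and $\sup_{\p(B)\le 1-p}\mathbb E[|X|\id_B]$. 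Taking suprema over $X\in\mathcal S$ and letting $p\uparrow 1$ (so $\delta=1-p\downarrow 0$), statement (ii) becomes exactly the quantifier-free reformulation of \eqref{eq:Meyer} with $\p(A)=\delta$, which the excerpt already notes is equivalent to uniform integrability. I would also invoke $\ES_p(|X|)\ge \mathbb E[|X|]$ to ensure $\mathcal S\subseteq L^1$ under (ii).

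The step I expect to be the main obstacle is (iii)$\Rightarrow$(ii), i.e.\ deducing boundedness/vanishing of $(1-p)\ES_p(|X|)$ from that of $(1-p)\ES_p(\pm X)$. Here I would apply Theorem \ref{th:main} to the coherent distortion risk measure $\ES_p$, whose distortion function $h_p(t)=(t/(1-p))\wedge 1$ is concave and, for $p>0$, not the identity; then $h_p(1/2) = \min\{1/(2(1-p)),1\}$, so for $p\ge 1/2$ we have $h_p(1/2)=1$ and Theorem \ref{th:main} gives $s_{\ES_p}\le (1+1/2)/(1-1/2)=3$. Consequently $\ES_p(|X|)\le 3\,(\ES_p(X)\vee\ES_p(-X))$ for every $X\in L^1$ and every $p\in[1/2,1)$, with the bound $3$ uniform in $p$. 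Multiplying by $(1-p)$ and taking the supremum over $X\in\mathcal S$ then propagates the convergence in (iii) to (ii); one should first check from (iii) that $\mathcal S\subseteq L^1$ (e.g.\ because finiteness of $\ES_p(X)$ and $\ES_p(-X)$ for a single $p$ forces $\mathbb E[X_+]<\infty$ and $\mathbb E[X_-]<\infty$), so that Proposition \ref{th:cor1} / Theorem \ref{th:main} applies to each $X\in\mathcal S$. The only delicate point is confirming the uniform constant: the explicit form of $h_p$ makes $s_{\ES_p}$ bounded by $3$ for all $p\ge 1/2$, so no uniformity issue actually arises, and the argument closes.
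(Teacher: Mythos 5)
Your proposal is correct and follows essentially the same route as the paper: the equivalence (i)$\Leftrightarrow$(ii) rests on the representation of $(1-p)\ES_p(|X|)$ as a supremum of $\int_A|X|\,\d\p$ over events of probability $1-p$, matched against the characterization \eqref{eq:Meyer}, and the equivalence with (iii) is obtained by applying Theorem \ref{th:main} to $\ES_p$ for $p\ge 1/2$, where $h_p(1/2)=1$ gives the uniform folding-score bound $3$. The only cosmetic difference is that you write the (i)$\Leftrightarrow$(ii) step via an explicit quantile/atom decomposition rather than citing the representation \eqref{eq:lem:2} directly, and your bookkeeping on $\mathcal S\subseteq L^1$ and on the sign of $\ES_p(\pm X)$ for $p$ near $1$ is sound.
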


 \begin{proof} 
 \underline{(i)$\Leftrightarrow$(ii)}. 
 We will use the following well-known result on the representation of  ES (e.g., Eq.~(3.1) of \cite{EW15}):  For any random variable $X$ and $p\in (0,1)$,
\begin{equation}\label{eq:lem:2}
   \ES_p(X) =\sup \{ \E[X|A]: A \in \mathcal F,~ \p(A) = 1-p\}. 
\end{equation}
Given $\epsilon>0$,  by letting $\delta=1-p$ in (ii), using \eqref{eq:lem:2}, we have  
\begin{align*}
    \sup_{X\in \mathcal{S}} (1-p)\ES_p(|X|) &=  \sup_{X\in \mathcal{S}} \sup_{A\in \mathcal{F}_\delta} \delta \E [X|A]= \sup_{X\in \mathcal{S}} \sup_{A\in \mathcal{F}_\delta} \int_A |X| \d \p,
\end{align*}
where $\mathcal{F}_\delta = \{A \in \mathcal F : \p (A) = \delta\}$. Since uniform integrability (i) is equivalent to   \eqref{eq:Meyer}, we know that it is also to equivalent to (ii).

 \underline{(ii)$\Leftrightarrow$(iii)}.  For $p > 1/2$ the ratio of $\ES_p(|X|) $ to $\ES_p(X) \vee \ES_p(-X)$ is bounded below by $1$ and above by a constant from Theorem \ref{th:main} (Example \ref{ex:1} shows this range is contained in $[1,3]$). As $p \uparrow 1$, this uniform bound applies and hence (ii) and (iii) are equivalent.
 \end{proof}

\subsection{Uniform integrability and distortion risk measures} 
Using the equivalence of (i) and (ii) from Theorem \ref{th:1}, we have the following characterization of uniform integrability in terms of distortion risk measures.   
 In what follows, let $\mathcal D_{c}$ 
be the set of concave functions $h\in \mathcal D$ with $h(t)/t\to \infty$ as $t\downarrow 0$.
Recall that the domain of all coherent distortion risk measures $\rho_h$ is chosen as $L^1$.


 \begin{theorem}\label{th:3} 
   For  any $\mathcal{S}\subseteq L^1$, 
   the following are equivalent. 
   \begin{enumerate}[(i)]
       \item  $\mathcal S$ is uniformly integrable;
 \item  there exists  $h\in \mathcal D_c$  such that $   \rho_h(|X|)   $ is bounded for $X\in \mathcal S$;
  \item   there exists  $h\in \mathcal D_c$  such that $  \rho_h(X)     $ and $   \rho_h(-X)    $ are   bounded for $X\in \mathcal S$.  
   \end{enumerate} 
   \end{theorem}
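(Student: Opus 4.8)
The plan is to prove the cycle of implications (i)$\Rightarrow$(ii)$\Rightarrow$(iii)$\Rightarrow$(i), leveraging the equivalence of uniform integrability with statement (ii) of Theorem \ref{th:1} as the main engine.

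\textbf{Step 1: (i)$\Rightarrow$(ii).} Assume $\mathcal S$ is uniformly integrable. By Theorem \ref{th:1}(ii), $\epsilon_n := \sup_{X\in\mathcal S}(1-p_n)\ES_{p_n}(|X|)\to 0$ along a sequence $p_n\uparrow 1$, which I will choose conveniently (e.g.\ $p_n = 1-2^{-n}$). I want to build a single concave $h\in\mathcal D_c$ that dominates a suitable combination of these ES's while keeping $\rho_h(|X|)$ bounded. The natural construction is to take a countable convex/conical combination $h = \sum_n c_n h_n$ where $h_n$ is the distortion function of $\ES_{p_n}$, i.e.\ $h_n(t) = (t/(1-p_n))\wedge 1$ (property (g) in Section \ref{sec:2}), with positive weights $c_n$ chosen so that (a) $\sum_n c_n h_n(1) = \sum_n c_n < \infty$ so $h\in\mathcal D$ after normalization; (b) $h(t)/t = \sum_n c_n h_n(t)/t \to \sum_n c_n/(1-p_n) = \infty$ as $t\downarrow 0$, which holds as long as $\sum_n c_n/(1-p_n) = \infty$; and (c) $\rho_h(|X|) = \sum_n c_n \ES_{p_n}(|X|) \le \sum_n (c_n/(1-p_n))\,\epsilon_n$ is finite and uniform in $X\in\mathcal S$, which needs $\sum_n (c_n/(1-p_n))\epsilon_n < \infty$. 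Since $\epsilon_n\to 0$, I can pick $c_n$ so that $c_n/(1-p_n) = 1/(n^2\epsilon_n \vee 1)$ or similar — making (b) diverge while (c) converges and (a) converges. One must double-check that $h$ so defined is concave (a sum of concave functions is concave), increasing, and that $h(t)/t\to\infty$ genuinely; the subtlety is coordinating the three summability conditions, but there is enough freedom. Normalizing $h$ by $h(1)$ only scales $\rho_h$ by a constant, preserving all properties.

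\textbf{Step 2: (ii)$\Rightarrow$(iii).} This is immediate from Proposition \ref{th:cor1}: for $h\in\mathcal D_c\subseteq\mathcal D$ concave and not the identity (it is not the identity since $h(t)/t\to\infty$), $\rho_h$ is a normalized coherent (hence convex) risk measure on $L^1$ with $s_{\rho_h}<\infty$ by Theorem \ref{th:main}, and $\{X,-X,|X|\}\subseteq L^1$ for $X\in\mathcal S\subseteq L^1$. So boundedness of $\rho_h(|X|)$ is equivalent to boundedness of both $\rho_h(X)$ and $\rho_h(-X)$.

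\textbf{Step 3: (iii)$\Rightarrow$(i).} Again by Proposition \ref{th:cor1}, boundedness of $\rho_h(X)$ and $\rho_h(-X)$ gives boundedness of $\rho_h(|X|)$, say by $M$, for all $X\in\mathcal S$. Now I use $h\in\mathcal D_c$ to conclude uniform integrability via Theorem \ref{th:1}(ii). The key is a lower bound of the form $\rho_h(|X|) \ge \phi(p)\,(1-p)\ES_p(|X|)$ for all $p$ close to $1$, where $\phi(p)\to\infty$ as $p\uparrow 1$; then $(1-p)\ES_p(|X|) \le M/\phi(p)\to 0$ uniformly. Since $|X|\ge 0$ and $h$ is left-continuous (after possibly modifying $h$ at $0$, which does not change $\rho_h$ on $L^1_+$, or by using concavity which forces left-continuity on $(0,1]$), I can write $\rho_h(|X|) = \int_0^1 \VaR_{1-q}(|X|)\,dh(q)$ by property (c), and since $\VaR_{1-q}(|X|)$ is nonnegative and decreasing in $q$... actually increasing in $q$ near $1$... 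I restrict the integral to $q\in[p,1]$ and bound $dh$ below there: because $h$ is concave with $h(t)/t\to\infty$, the slope $h(p) - h(q)$ type estimates give that $h(1)-h(p) = 1-h(p)$ can be compared to... hmm, the cleanest route is: by concavity, for $q\in[p,1]$ the increment $dh(q)$ dominates a multiple of $dq$ is false in general, so instead I compare $\rho_h$ directly to a scaled ES. Concretely, since $h$ is concave and $h(t)/t\to\infty$, for any $L$ there is $t_L>0$ with $h(t)\ge L t$ for $t\le t_L$; taking $p$ with $1-p \le t_L$, concavity gives $h(t) \ge L t$ on $[0,1-p]$ and hence $h(t)\ge (L(1-p))\wedge\, \text{(something)} \ge \min\{Lt, \dots\}$; more simply $h \ge L\cdot\big((t/1)\wedge(1-p)\cdot\dots\big)$ — the right comparison is $h(t) \ge L(1-p)\cdot \big(t/(1-p) \wedge 1\big) = L(1-p) h_p(t)$ for $t\le 1-p$, and for $t > 1-p$ we have $h(t)\ge h(1-p)\ge L(1-p) = L(1-p)h_p(t)$; so $h \ge L(1-p)h_p$ everywhere on $[0,1]$, whence $\rho_h(|X|)\ge L(1-p)\ES_p(|X|)$. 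Therefore $(1-p)\ES_p(|X|)\le M/L$ for all $X\in\mathcal S$ and all $p$ with $1-p\le t_L$. Letting $L\to\infty$ yields Theorem \ref{th:1}(ii), hence (i).

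\textbf{Main obstacle.} The delicate part is Step 1: choosing the weights $c_n$ so that the constructed $h$ simultaneously lies in $\mathcal D_c$ (requiring $\sum c_n/(1-p_n)=\infty$) and keeps $\rho_h(|X|)$ uniformly bounded (requiring $\sum c_n\ES_{p_n}(|X|) \le \sum (c_n/(1-p_n))\epsilon_n < \infty$). This is a standard diagonalization-type argument — since $\epsilon_n\to 0$ one sets, e.g., weights making $c_n/(1-p_n)$ proportional to $1/(1+n^2\epsilon_n)$ — but it must be written carefully, including verifying $\sum c_n<\infty$ for normalization and that the resulting $h$ is genuinely concave and has the stated boundary behavior. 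Step 3's comparison inequality $h\ge L(1-p)h_p$ is the other point needing a clean argument, but as sketched it follows directly from concavity and the defining property of $\mathcal D_c$.
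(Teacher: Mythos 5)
Your proposal is correct and follows essentially the same route as the paper: (i)$\Rightarrow$(ii) via a countable conical combination of the ES distortion functions $h_{p_n}$, (ii)$\Leftrightarrow$(iii) via Proposition \ref{th:cor1}, and the return to uniform integrability via the pointwise comparison $h\ge L(1-p)h_p$ feeding into Theorem \ref{th:1}(ii) (the paper runs this last step as the contrapositive of (ii)$\Rightarrow$(i), but it is the same inequality). One small caution on Step 1: the specific weight formula $c_n/(1-p_n)=1/(n^2\epsilon_n\vee 1)$ can fail requirement (b) when $\epsilon_n$ decays slowly (e.g.\ $\epsilon_n=1/\log n$ makes the series converge); the clean fix, which is what the paper does, is to first pass to a subsequence of levels with $\epsilon_n\le 2^{-n}$ and $1-p_n\le 2^{-n}$ and then take all slope-weights $c_n/(1-p_n)$ equal to $1$.
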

 \begin{proof}
\underline{(ii)$\Rightarrow$(i)}. We proceed by contrapositive. Suppose that $\mathcal{S}$ is not uniformly integrable. Then there exists $m > 0$ and for each $n\in \N$, there exist $ p_n$ and $X_n$  such that  $(1-p_n)\ES_{p_n}(\vert X_n \vert) > m$. 
 
 We show that $\rho_h(\vert X \vert)$ is unbounded for $X\in \mathcal{S}$. That is, given $M> 0$, we construct some $X$ such that $\rho_h(\vert X \vert) > M.$  Choose $p_N$ in the sequence $p_n$ such that $1-p < M/m$. Let $p = p_N$ and $X = X_N$. 
 Denote by \begin{align}\label{eq:hp} h_p(t) = \big(t/(1-p)\big) \wedge 1,\end{align} that is,  the distortion function corresponding to $\ES_p$. We have
 \begin{align*}
 \rho_h(|X|) &= \int_0^\infty h\big(\p(|X|>x)\big)\d x \\
 &\geq \int_0^\infty h\big(\p(|X|>x)\wedge (1-p)\big)\d x\\
 &\geq \frac M m \int_0^\infty \p(|X|>x)\wedge (1-p)\d x\\
 &\geq\frac M m \int_0^\infty (1-p)h_p\big(\p(|X|>x)\big)\d x \geq \frac M m(1-p)\ES_p(|X|)\geq M.
\end{align*}
 Hence, $\{\rho_h(|X|): X \in \mathcal{S}\}$  is not bounded.

\underline{(i)$\Rightarrow$(ii)}. 
Using Theorem \ref{th:1}, (i) implies 
$\sup_{X\in \mathcal{S}} (1-p)\ES_p(\vert X \vert) \downarrow 0$ as $p \uparrow 1$. We can choose a sequence $(p_n)_{n\in\N} \subseteq (0,1)$ such that $1-p_n < 2^{-n}$ and $\sup_{X\in \mathcal{S}} (1-p_n)\ES_{p_n}(\vert X \vert) < 2^{-n}$ for each $n$.
Define for $t\in [0,1]$,  with $h_p$ defined in \eqref{eq:hp}, 
$$g(t)=\sum_{n=1}^\infty (1-p_n)h_{p_n}(t) = \sum_{n=1}^\infty t \wedge (1-p_n).$$
It is straightforward to verify that 
$g$ is finite as $g(t) \leq g(1) =  \sum_{n=1}^\infty (1-p_n) \leq \sum_{n=1}^\infty 2^{-n} = 1$ for $t\in [0,1]$. It is concave, because it is  the sum of concave functions. 
Moreover, $g(0) = 0$ and $g(1)>0$. 
We further define $$h(t) = \frac{g(t)}{g(1)} = \frac{\sum_{n=1}^\infty t \wedge (1-p_n)}{\sum_{n=1}^\infty 1 \wedge (1-p_n)},$$
which is simply the normalized version of $g$ to ensure $h(1) = 1$. The   properties for $g$ imply  that 
$h$ is a distortion function that is concave.  
Moreover, we can verify that for any $N\in \N$, $$\lim_{t\downarrow 0} \frac{g(t)}{t} = \lim_{t\downarrow 0}\sum_{n=1}^\infty 1 \wedge \frac{1-p_n}{t}
\ge \lim_{t\downarrow 0}\sum_{n=1}^N 1 \wedge \frac{1-p_n}{t}
=N.
$$  
Therefore, $\lim_{t\downarrow 0} g(t)/t=\infty.$ 
Finally, 
\begin{align*}
 \rho_h(|X|) &= \frac{1}{g(1)}\int_0^\infty g\big(\p(|X|>x)\big)\d x \\
 &=\frac{1}{g(1)}  \int_0^\infty \sum_{n=1}^\infty \p(|X|>x) \wedge (1-p_n) \d x \\
 &=\frac{1}{g(1)}  \sum_{n=1}^\infty (1-p_n) \ES_{p_n}(\vert X \vert)  \leq  \frac{1}{g(1)} \sum_{n=1}^\infty 2^{-n} =\frac{1}{g(1)} < \infty .
\end{align*}
Therefore, (ii) holds.

\underline{(ii)$\Leftrightarrow$(iii)}. This follows from Proposition \ref{th:cor1}. 
  \end{proof}

The condition $\mathcal S\subseteq L^1$ is imposed only to make sure that $\rho_h$ is properly defined on $\mathcal S$.
Indeed, for $\mathcal S$ that is not contained in $L^1$, each statement in  Theorem \ref{th:3} fails hold, but to make sense of (ii) and (iii), we need to enlarge the domain $L^1$ of $\rho_h$  to include the random variables $|X|,X,-X$ for all $X\in \mathcal S$.

  Theorem \ref{th:3} closely resembles  the classic de la Vall\'ee Poussin
criterion  on uniform integrability: A set $\mathcal S$ is uniformly integrable if and only if there exists an increasing convex function $\phi:\R_+\to \R_+$ with $\phi(x)/x\to \infty$ as $x\to \infty$ 
  such that $\sup_{X\in \mathcal S}\E[\phi(|X|)]<\infty$; see \citet[Theorem T22]{M66}. This highlights a duality between $X\mapsto \rho_h(X)$ for a concave $h$ and $X\mapsto \E[\phi(X)]$ for a convex $\phi$. Indeed, $\rho_h$ is called the dual utility model of \cite{Y87} in decision theory, and $X\mapsto\E[\phi(X)]$ represents an expected utility model in the same context.

\begin{remark} 
For the implication in  Theorem \ref{th:3} (ii)$\Rightarrow$(i), 
\begin{align}
    \label{eq:R1-1}
\mbox{$\rho_h(|X|)$ is bounded for $X\in \mathcal S\subseteq L^1$} ~\Longrightarrow ~\mbox{$\mathcal S$ is uniformly integrable}, 
\end{align}
the risk measure $\rho_h$ satisfies convexity, cash invariance, and positive homoegeneity. 
It is clear that if we replace $\rho_h$ by 
$\rho=f\circ \rho_h$ where $f:\R\to \R$ is an increasing function that is unbounded from above, then \eqref{eq:R1-1} remains true for $\rho$. Such $\rho$ does not necessarily satisfy the properties of $\rho_h$. 
Hence, for the implication (ii)$\Rightarrow$(i) in Theorem \ref{th:3}, the essential property that we need is how  $\rho_h$ integrates  the tail probability, and \eqref{eq:R1-1} can hold for risk measures without convexity, cash invariance, or positive homoegeneity. 
\end{remark}
  
The condition $h\in \mathcal D_c$ in Theorem \ref{th:3} allows for $h$ to have a jump at $0$, which makes $h(t)/t\to \infty$ automatically as $t\downarrow 0$.
Such $h$ is not interesting, because boundedness of $\rho_h(|X|)$ over $X\in \mathcal S$ for such $h$ forces   $\mathcal S$ to be bounded, an obvious case of uniform integrability.  

In part (ii) or (iii) of  Theorem \ref{th:3}, $\rho_h$ 
 serves as a ``testing" risk measure for uniform integrability of $\mathcal S$. In the statement of Theorem \ref{th:3}, such a choice of $\rho_h$   depends on $\mathcal S$. 
 One may wonder whether 
 there is   a single $\rho_h$ that works for all sets $\mathcal S$. 
 To answer this question, we first obtain a result on the finiteness of coherent distortion risk measures, which may be of independent interest. 
For a  risk measure $\rho$ on $L^1$, we say that $\rho$ is \emph{expectation-dominated} if there exists $c>0$ such that 
$\rho \le c \, \E$ on $L_+^1$. 
 For instance, $\ES_p$ is expectation-dominated for each $p\in (0,1)$, since $\ES_p(X)\le (1-p)^{-1}\E[X]$ for $X\in L^1_+$.

 \begin{theorem}\label{th:finite}
Let $\rho$ be a coherent distortion risk measure on $L^1$.
The following are equivalent.
\begin{enumerate}[(i)]
    \item  $\rho$ is expectation-dominated;
    \item $\rho$ is finite on $L^1$;
    \item the distortion function $h$ of $\rho$ satisfies $\lim_{t\downarrow 0} h(t)/t<\infty$.
\end{enumerate}
 \end{theorem}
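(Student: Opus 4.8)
The plan is to prove the cycle (i)$\Rightarrow$(ii)$\Rightarrow$(iii)$\Rightarrow$(i), with the middle and last implications being essentially book-keeping and the first being the only point requiring a real idea. For (i)$\Rightarrow$(ii): if $\rho\le c\,\E$ on $L^1_+$, then for general $X\in L^1$ write $X = X_+ - X_-$ and use monotonicity together with cash invariance (or comonotonic additivity, since $X_+$ and $-X_-$ are comonotonic) to bound $\rho(X)\le\rho(X_+)\le c\,\E[X_+]<\infty$; since $\rho_h\ge\E$ by property \eqref{f}, $\rho$ is also finite below, so $\rho$ is finite on all of $L^1$. For (ii)$\Rightarrow$(iii): I would argue the contrapositive. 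If $\lim_{t\downarrow 0}h(t)/t=\infty$ (the limit exists in $[1,\infty]$ since $t\mapsto h(t)/t$ is decreasing by concavity and $h(1)=1$), I want to build $X\in L^1_+$ with $\rho_h(X)=\infty$. Using property (c) (assuming $h$ left-continuous, or passing to the left-continuous version which does not change $\rho_h$ on $L^1$), $\rho_h(X)=\int_0^1 \VaR_{1-q}(X)\,\d h(q)$, equivalently $\int_0^1 \VaR_u(X)\,\d\bar h(u)$ where $\bar h(u)=1-h(1-u)$; choosing $X$ with a quantile function that blows up near $u=1$ at a rate just slow enough to keep $\E[X]=\int_0^1\VaR_u\,\d u<\infty$ but fast enough that it is not integrable against the measure $\d\bar h$, whose density near $u=1$ (i.e.\ near $t=0$ in $h$) is unbounded. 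Concretely I would take a sequence $t_n\downarrow 0$ with $h(t_n)/t_n\ge 4^n$ and let $\p(X>x)$ decrease through the levels $t_n$ on intervals whose lengths $\ell_n$ satisfy $\sum \ell_n t_n<\infty$ but $\sum \ell_n h(t_n)=\infty$; such $\ell_n$ exist since $h(t_n)/t_n\to\infty$ (e.g.\ $\ell_n = 1/(2^n t_n)$ works if the ratios grow fast enough, adjusting the subsequence). Then $X\in L^1$ but $\rho_h(X)=\int_0^\infty h(\p(X>x))\,\d x=\infty$, contradicting (ii).

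For (iii)$\Rightarrow$(i): suppose $L:=\lim_{t\downarrow 0}h(t)/t<\infty$. Since $h$ is concave with $h(0)=0$, the function $t\mapsto h(t)/t$ is decreasing, so $h(t)\le Lt$ for all $t\in[0,1]$. Then for $X\in L^1_+$,
\begin{align*}
\rho_h(X)=\int_0^\infty h\big(\p(X>x)\big)\,\d x \le \int_0^\infty L\,\p(X>x)\,\d x = L\,\E[X],
\end{align*}
so $\rho$ is expectation-dominated with constant $c=L$. This closes the cycle.

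The main obstacle is the construction in (ii)$\Rightarrow$(iii): one must choose the subsequence $(t_n)$ and the interval lengths $(\ell_n)$ carefully so that the resulting tail function $x\mapsto \p(X>x)$ is genuinely decreasing and right-continuous (a valid survival function) while simultaneously achieving $\sum\ell_n t_n<\infty$ and $\sum\ell_n h(t_n)=\infty$. The delicate point is that $h(t_n)/t_n$ only tends to infinity, not at any prescribed rate, so the choice of $\ell_n$ must be made adaptively: given that $h(t_n)/t_n\to\infty$, first thin the sequence so that $h(t_n)/t_n\ge 2^n$, then set $\ell_n$ proportional to $1/(2^n t_n)\wedge(\text{something ensuring monotonicity of the step heights})$, so that $\sum\ell_n t_n\le\sum 2^{-n}<\infty$ while $\sum \ell_n h(t_n)\ge\sum \ell_n t_n 2^n\ge\sum(\text{constant})=\infty$. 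One should also remark that replacing $h$ by its left-continuous modification changes $\rho_h$ only on a set of $X$'s of measure zero in the relevant sense and in particular not on the $X$ we construct (which can be taken with continuous distribution), so property (c) applies; alternatively, avoid (c) entirely and work directly with the Choquet-integral formula $\rho_h(X)=\int_0^\infty h(\p(X>x))\,\d x$ for $X\ge 0$, which is what the displayed computations above already do.
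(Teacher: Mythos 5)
Your proposal is correct, and the overall cycle (i)$\Rightarrow$(ii)$\Rightarrow$(iii)$\Rightarrow$(i) matches the paper's; the only genuine differences are in how two steps are verified. For (i)$\Rightarrow$(ii) you bound $\rho(X)\le\rho(X_+)\le c\,\E[X_+]$ and use $\rho_h\ge\E$ (property (f)) for the lower bound, whereas the paper bounds $\rho(X)$ between $\rho(-|X|)$ and $\rho(|X|)$ and gets finiteness of $\rho(-|X|)$ from convexity via $\rho(-|X|)+\rho(|X|)\ge 0$; both work, and yours has the minor advantage of not needing convexity explicitly (though it does lean on concavity of $h$ through property (f)). For (ii)$\Rightarrow$(iii), your construction is morally identical to the paper's — the paper's $X=\sum_n 2^{-n}(t_nh(t_n))^{-1/2}\id_{\{U<t_n\}}$ is exactly a nonnegative variable whose survival function is a step function spending length $\ell_n$ at level $t_n$ with $\sum\ell_n t_n<\infty$ and $\sum\ell_n h(t_n)=\infty$ — but you verify $\rho_h(X)=\infty$ by computing the Choquet integral $\int_0^\infty h(\p(X>x))\,\d x$ directly as $\sum_n\ell_n h(t_n)$, while the paper invokes comonotonic additivity, monotonicity, and positive homogeneity to get $\rho(X)\ge\sum_{n\le k}2^{-n}\rho(X_{t_n})>k$. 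Your route is slightly more elementary (no appeal to comonotonic additivity) and, contrary to your worry at the end, not delicate at all: with $t_n$ chosen so that $h(t_n)/t_n\ge 4^n$ (possible since the decreasing ratio $h(t)/t$ tends to $\infty$) and $\ell_n=1/(2^nt_n)$, one gets $\sum\ell_n t_n=\sum 2^{-n}<\infty$ and $\ell_n h(t_n)\ge 2^n$, so $\sum\ell_n h(t_n)=\infty$; the levels $t_n$ are automatically decreasing, so the survival function is valid, and the digression about left-continuity of $h$ and property (c) is unnecessary since for $X\ge 0$ you only ever use the first term of the Choquet integral.
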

 \begin{proof}
 \underline{(i)$\Rightarrow$(ii)}.
Since $\rho\le c\E$ on $L^1_+$ for some $c>0$, we have    $  \rho(|X|)<\infty$
for all $X\in L^1$. 
Since $\rho$ is convex,
$\rho(-|X|)+\rho(|X|)\ge 2\rho(0)= 0$, which implies $ \rho(-|X|)>-\infty$.
Monotonicity of $\rho$ yields $-\infty<\rho(-|X|)\le \rho(X) \le \rho(|X|) <\infty$, showing that $\rho$ is finite on $L^1$.

\underline{(ii)$\Rightarrow$(iii)}.  
Note that $\lim_{t\downarrow0}h(t)/t$ exists since $h(t)/t$ is decreasing in $t$, due to concavity of $h$. 
Suppose $h(t)/t=\infty$, and  we will show that $\rho$ cannot be finite on $L^1$.   
Let $U$ be uniformly distributed on $[0,1]$. 
For $t\in (0,1)$, let  $B_t=\id_{\{U<t\}}$ and   
$$X_t=\bigg(\frac{1}{th(t)}\bigg)^{1/2} B_t .$$
Note that $\E[X_t]=(t/h(t))^{1/2}\le 1.$
It is straightforward to compute 
$\rho(X_t)= (h(t)/t)^{1/2}$
for $t\in (0,1)$,
and hence  
$$
\lim_{t\downarrow 0} \rho(X_t) 
=\lim_{t\downarrow 0} \bigg(\frac{h(t)}{t}\bigg)^{1/2}=\infty.
$$
Let $t_n >0$ be such that $\rho (X_{t_n})>2^n.$
Define a random variable  $X$ by 
$$X= \sum_{n=1}^\infty 2^{-n} X_{t_n}.$$
Note that $X$ is in $L^1_+$ since $X\ge 0$ and  $\E[X] = \sum_{n=1}^\infty 2^{-n} \E[X_{t_n}] \le 1.$
Finally,
we note that the random vector $(X_{t_n})_{n \leq k}$ is comonotonic for each $k \in \N$ because each $X_t$ is a decreasing function of $U$. Since any distortion risk measure is comonotonic-additive, monotone, and positively homogeneous,
we have, for each $k$,
$$\rho(X) = \rho\bigg(  \sum _{n=1}^\infty 2^{-n}X_{t_n}\bigg )\ge  \rho\bigg(\sum_{n=1}^k  2^{-n} X_{t_n}  \bigg)
=   \sum_{n=1}^k  2^{-n} \rho(X_{t_n}) >  \sum_{n=1}^k   1 =k.
$$
This shows $\rho(X)=\infty$, and hence $\rho$ is not finite on $L^1$.

 \underline{(iii)$\Rightarrow$(i)}. 
Let  $c=\lim_{t\downarrow 0} h(t)/t<\infty$.
  Since $h(t)/t$ is decreasing in $t$, we have  
  $h(t)\le c t$ for $t\in [0,1]$.
Hence, $\rho_h(X) \le c \E[X]$ for $X\in L^1_+$. 
 \end{proof}

 Theorem \ref{th:finite}  generalizes a special case of Proposition 1 of \cite{WWW20}, which states that if $h$ is concave and $\rho_h$ is finite on $L^1$, then the right derivative $h'$ of $h$ has finite $L^q$-norm on $[0,1]$ with respect to the Lebesgue measure for all $q>0$.
 Theorem \ref{th:finite} further gives  that $h'$ has finite $L^\infty$-norm, which is a stronger condition.

 \begin{remark}
 Theorem \ref{th:finite} (i)$\Leftrightarrow$(iii) can be derived from
 some results in the context of rearrangement-invariant Banach norms. In particular, we find Theorem 29 and Corollary 30 of \cite{FW22} the most relevant. Translating their results on Banach norms into our setting of distortion risk measures, it is shown that for $h\in \mathcal D$ with  $h(0+)=0$,  if $ h'(0)<\infty$ then the induced norm by $\rho_h$  via $X\mapsto \rho_{h}(|X|)$ is equivalent to the $L^1$ norm; if  $h'(0)=\infty$ then the induced norm by $\rho_h$ is not equivalent to the $L^1$ norm. Our proof uses an explicit construction to show the implication (ii)$\Rightarrow$(iii), not covered by \cite{FW22},   and includes the case $h(0+)>0$. 
 A similar observation can be made for Proposition \ref{prop:finite2} below, which characterizes the finiteness of law-invariant coherent risk measures on $L^1$. 
 \end{remark}
 
Using Theorem \ref{th:finite}, we can show that there does not exist a single $\rho_h$ that works for all sets $\mathcal S$ in Theorem \ref{th:3}.  
 
\begin{proposition}
There is no coherent distortion risk measure $\rho$
 such that for all sets $\mathcal S \subseteq L^1 $, 
 $$
 \sup_{X\in \mathcal S} \rho(|X|)<\infty~~ \iff ~~\mbox{$\mathcal S$ is uniformly integrable}.
 $$  
\end{proposition}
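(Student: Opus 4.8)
The plan is to argue by contradiction: suppose such a universal $\rho = \rho_h$ exists, with distortion function $h \in \mathcal D$ concave (coherence), and derive inconsistent requirements on $h$ near $0$. First, I would observe that the ``$\Leftarrow$'' direction (uniform integrability implies $\sup_{X\in\mathcal S}\rho_h(|X|)<\infty$) forces $\rho_h$ to be finite on $L^1$: indeed, any singleton $\mathcal S = \{X\}$ with $X \in L^1$ is trivially uniformly integrable, so $\rho_h(|X|) < \infty$ for every $X \in L^1$, hence $\rho_h$ is finite on $L^1$. By Theorem \ref{th:finite}, this is equivalent to $\lim_{t\downarrow 0} h(t)/t < \infty$, i.e. $h \notin \mathcal D_c$.

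Next, I would use the ``$\Rightarrow$'' direction against this. Since $\lim_{t\downarrow 0} h(t)/t =: c < \infty$ and $h(t)/t$ is decreasing, we have $h(t) \le ct$ on $[0,1]$, so $\rho_h(|X|) \le c\,\E[|X|]$ for all $X \in L^1$. The idea is then to build a set $\mathcal S \subseteq L^1$ that is \emph{not} uniformly integrable but nonetheless has $\sup_{X\in\mathcal S}\E[|X|] < \infty$ — then $\sup_{X\in\mathcal S}\rho_h(|X|) \le c\sup_{X\in\mathcal S}\E[|X|] < \infty$, contradicting the assumed equivalence. Such a set is standard: on the atomless space, for each $n$ take $A_n \in \mathcal F$ with $\p(A_n) = 1/n$ and set $X_n = n\,\id_{A_n}$, so $\E[|X_n|] = 1$ for all $n$, while $\E[|X_n|\id_{\{|X_n|>K\}}] = 1$ for every $n > K$, so $\mathcal S = \{X_n : n \in \N\}$ fails uniform integrability. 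This gives the contradiction and completes the proof.

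I expect essentially no serious obstacle here — the proposition is a short corollary of Theorem \ref{th:finite} combined with the elementary non-uniformly-integrable example. The only point requiring a line of care is the reduction showing that the ``$\Leftarrow$'' half of the claimed equivalence forces finiteness on $L^1$ (via singletons), and then invoking Theorem \ref{th:finite} to pass to $h(t) \le ct$; after that the bounded-mean, not-UI example is immediate. One could alternatively phrase the whole argument without Theorem \ref{th:finite} by noting that if $h \in \mathcal D_c$ then $\rho_h$ is infinite on some $L^1$ variable (again by the construction in the proof of Theorem \ref{th:finite}), violating ``$\Leftarrow$'' on a singleton; and if $h \notin \mathcal D_c$ then the bounded-mean example violates ``$\Rightarrow$''. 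Either way the two directions of the claimed equivalence pull $h$ in incompatible directions, and no $\rho_h$ can satisfy both.
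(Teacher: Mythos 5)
Your proposal is correct and follows essentially the same route as the paper: both arguments reduce to the dichotomy of Theorem \ref{th:finite} (finite on $L^1$ iff expectation-dominated iff $h'(0+)<\infty$), using a singleton to rule out non-finite $\rho$ and a bounded-mean, non-uniformly-integrable family such as $n\,\id_{A_n}$ with $\p(A_n)=1/n$ to rule out expectation-dominated $\rho$. The only cosmetic difference is that you phrase it as a contradiction while the paper argues directly by cases.
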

\begin{proof} 
First, by Theorem \ref{th:finite}, if $\rho$ is finite on $L^1$,  then it is expectation-dominated.
Together with property (\ref{f}) from Section \ref{sec:2}, we have $\E \leq \rho \leq c\E $ on $L_+^1$ for some $c\ge 1$. In this case, $ \sup_{X\in \mathcal S} \rho(|X|)<\infty$ is equivalent to $\sup_{X\in \mathcal S} \E[|X|]<\infty$, which is not sufficient for uniform integrability of $\mathcal S$. 
Next,  by Theorem \ref{th:finite} again, if $\rho$ is not finite on $L^1$, then there exists $X\in L^1$ such that $\rho(|X|)=\infty$,
but $\{X\}$ is uniformly integrable. Hence, uniform integrability of $\mathcal S $ does not guarantee $\sup_{X\in \mathcal S} \rho_h(X) <\infty.$
\end{proof}

\begin{example}[Integrated ES]\label{ex:ies}
Define the integrated ES (IES) for a random variable $X$ as
\begin{align}\label{eq:alter}
\mathrm{IES}(X) =\int_0^1 \ES_p (X) \d p   = \int_0^1 \int_0^q  \frac{1}{1-p} \VaR_q (X)  \d p\d q = \int_0^1- \log(1-q)\VaR_q (X) \d q.
\end{align}
The distortion function $h$ of $\mathrm{IES}$ is in $\mathcal D_c$ because 
$\lim_{t\downarrow 0}h(t)/t=  
- \lim_{t\downarrow 0}  \log (t)=\infty. $  
 By Theorem \ref{th:3}, 
any set $\mathcal S$  of random variables on which  $ \mathrm{IES}   $ is  bounded is uniformly integrable. 
  Proposition 1 of \cite{WWW20} implies that $\mathrm{IES}  $ is finite on $ L^{1+\epsilon}$ for any $\epsilon >0$, and by Theorem \ref{th:finite}, $\mathrm{IES}  $ is not finite on $ L^{1}$. One specific example of $X\in L^1$ with $\mathrm{IES}(X)=\infty$ is given below.
  Let $U$ be uniformly distributed on $[0,1/2]$ and let   $X= U^{-1} (\log U )^{-2}    $.
  We can compute
\begin{align*}
  \E[X]  & = 2 \int_0^{1/2}   ( - \log u )^{-2}  u^{-1} \d u  = 2 \int_{\log 2}^\infty  t ^{-2}    \d t   = 2 (\log 2 )^{-1}<\infty.
   \end{align*}
   By \eqref{eq:alter} and $\VaR_{1-q}(X)=2q ^{-1} (\log q - \log 2)^{-2}$, 
    \begin{align*} 
  \mathrm{IES}(X)  
=  \int_0^1  
-\log u  \frac{2} {u(\log u-\log 2)^2}\d u=\int_0^ {\infty} \frac{2x}{(x+\log2 )^2}\d x
=\infty.
   \end{align*}
 We can see that the set   $\{X\}$  is uniformly integrable and $\mathrm{IES}(X)=\infty$.
   \end{example}

\subsection{Uniform integrability and coherent risk measures}

Next, we turn to the more general class of law-invariant coherent risk measures.  
Let $\mathcal R$ be the set of all lower semicontinuous and law-invariant coherent risk measures on $L^1$ that are not finite on $L^1$. For instance,  IES  in Example \ref{ex:ies} is in $\mathcal R$.

\begin{theorem}\label{th:3-ch} 
For any $\mathcal S\subseteq L^1$, the following are equivalent.  
\begin{enumerate}[(i)]
\item $\mathcal S$ is uniformly integrable;
\item  there exists $\rho\in \mathcal R$ such that   $\rho(|X|) $ is bounded for $X\in \mathcal S$;
\item  there exists $\rho\in \mathcal R$ such that   $  \rho(X)     $ and $   \rho(-X)   $ are   bounded for $X\in \mathcal S$;
\item  there exist $\rho,\rho'\in \mathcal R$ such that   $  \rho(X)     $ and $   \rho'(-X)   $ are   bounded from above for $X\in \mathcal S$.
\end{enumerate}  
   \end{theorem}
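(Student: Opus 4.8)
\textbf{Proof proposal for Theorem \ref{th:3-ch}.}

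The plan is to establish the cycle of implications (i)$\Rightarrow$(ii)$\Rightarrow$(iii)$\Rightarrow$(iv)$\Rightarrow$(i). The first three implications should be relatively painless. For (i)$\Rightarrow$(ii): since any coherent distortion risk measure $\rho_h$ with $h\in\mathcal D_c$ is law-invariant, lower semicontinuous (property (d) from Section \ref{sec:2}), and not finite on $L^1$ (by Theorem \ref{th:finite}, since $h(t)/t\to\infty$), we have $\rho_h\in\mathcal R$, so (ii) here is weaker than statement (ii) of Theorem \ref{th:3}; thus (i) implies it directly. For (ii)$\Rightarrow$(iii): I would apply Proposition \ref{th:cor1}, but this requires $s_\rho<\infty$ for the chosen $\rho\in\mathcal R$, which is not known for a general law-invariant coherent risk measure --- so instead I will argue more carefully, or simply route through the observation that $\rho(X)\vee\rho(-X)\le\rho(|X|)$ by monotonicity (giving boundedness from above), and $\rho(X)+\rho(-X)\ge 2\rho(0)=0$ by convexity and normalization (giving boundedness from below), exactly as in the proof of Proposition \ref{th:cor1} (ii)$\Rightarrow$(i); note this direction needs no finiteness of the folding score. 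For (iii)$\Rightarrow$(iv): take $\rho'=\rho$; boundedness implies boundedness from above.

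The substantive implication is (iv)$\Rightarrow$(i), and this is where I expect the main obstacle. Here we are given $\rho,\rho'\in\mathcal R$ with $\sup_{X\in\mathcal S}\rho(X)=:a<\infty$ and $\sup_{X\in\mathcal S}\rho'(-X)=:b<\infty$, and we must deduce uniform integrability. The idea is to symmetrize: define a single risk measure that dominates both tails. A natural candidate is $\rho''(X):=\frac12\big(\rho(X)+\rho'(X)\big)$, or perhaps $\rho''(X):=\rho(X_+)+\rho'(X_-)$ after suitable care --- the goal is to produce $\rho''\in\mathcal R$ (law-invariant, coherent, lower semicontinuous, \emph{not} finite on $L^1$) with $\rho''(|X|)$ bounded on $\mathcal S$, whereupon (ii)$\Rightarrow$(i) applies. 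The delicate point is that an average or sum of two risk measures that are individually not finite on $L^1$ \emph{could} in principle become finite on $L^1$ only if... actually it cannot: if $\rho$ is not finite on $L^1$ then there is $Y\in L^1$ with $\rho(Y)=\infty$, and since $\rho'(Y)\ge\E[Y]>-\infty$ is bounded below on $\{Y\}$ trivially but could be $+\infty$ or finite, in either case $\rho''(Y)=\infty$; so $\rho''$ is not finite on $L^1$. Law invariance, coherence, and lower semicontinuity are all preserved under taking (finite) convex combinations, so $\rho''\in\mathcal R$.

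It remains to check that $\rho''(|X|)$ is bounded on $\mathcal S$ given (iv). Write $|X|\le X + 2X_- $ where $X_-=\max\{-X,0\}=(-X)_+$; more usefully, $|X|\le 2X_+ $ is false in general, so I will instead use $|X| = X_+ + X_-$ and bound each piece. We have $X_+\le |X|$ but that is circular; the clean route is: $X\le X_+$ and $-X\le X_-=(-X)_+$, and by monotonicity and the decomposition $|X|=X_++X_-$, coherence (subadditivity) gives $\rho''(|X|)\le\rho''(X_+)+\rho''(X_-)$. Now $\rho''(X_+)$: since $X_+\ge X$, monotonicity gives $\rho''(X_+)\ge\rho''(X)$, the wrong direction --- so I need an upper bound on $\rho''(X_+)$ in terms of $\rho''(X)$. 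Using $X_+ = X\vee 0$ and the fact that for a coherent $\rho''$ one has $\rho''(X\vee 0)\le\rho''(X)\vee\rho''(0)+\text{(something)}$ is not immediate. The cleanest fix: bound $\rho''(X_+)\le\rho''(X_+ ) $ via $X_+\le |X|$... still circular. I think the right move is to bound directly: $\rho''(|X|)\le\rho''( (X)_+ + (-X)_+ )\le \rho''(2\max\{X,-X,0\})$; and since $\max\{X,-X\}\le X+2(-X)_+$ pointwise when... Let me instead bound $\rho''(|X|)$ using that $|X|\le c_1 + $ (two shifted terms): from cash invariance, $\rho(X)\le a$ means $\rho(X-a)\le 0$, and the hard analytic content is really contained in Theorem \ref{th:main}/Theorem \ref{th:1} which bounds the $|X|$-tail by the two signed tails \emph{for distortion risk measures}. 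For general coherent $\rho$ I expect we need the additional structure $\rho\in\mathcal R$ together with the representation of law-invariant coherent risk measures as suprema of $\ES$-mixtures (Kusuoka); reducing to the $\ES$ case then lets Theorem \ref{th:1} close the argument. So the plan's crux: reduce (iv) to the $\ES$ statement of Theorem \ref{th:1}(iii) via a Kusuoka-type representation of the $\rho,\rho'$ appearing, and I expect verifying that this reduction is valid (in particular that $\rho''\in\mathcal R$ genuinely fails to be $L^1$-finite, and that boundedness of $\rho''(|X|)$ transfers) is the step most likely to require care.
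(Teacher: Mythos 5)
Your treatment of (i)$\Rightarrow$(ii), (ii)$\Rightarrow$(iii), and (iii)$\Rightarrow$(iv) is correct and matches the paper: (i)$\Rightarrow$(ii) follows from Theorem \ref{th:3} since $\rho_h\in\mathcal R$ for $h\in\mathcal D_c$, and your observation that (ii)$\Rightarrow$(iii) needs only monotonicity plus $\rho(X)+\rho(-X)\ge 0$, not finiteness of the folding score, is right. The problem is (iv)$\Rightarrow$(i), which you correctly identify as the crux but do not actually prove; what you offer there has a genuine gap, in two respects. First, the symmetrization $\rho''=\tfrac12(\rho+\rho')$ cannot work as stated: hypothesis (iv) bounds $\rho(X)$ and $\rho'(-X)$ from above, but gives no control on $\rho'(X)$ or $\rho(-X)$, so $\rho''(X)$ and $\rho''(-X)$ need not be bounded on $\mathcal S$ at all. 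Second, even granting a single $\rho''\in\mathcal R$ with $\rho''(\pm X)$ bounded, your attempt to pass to $\rho''(|X|)$ via $\rho''(|X|)\le\rho''(X_+)+\rho''(X_-)$ runs straight into the question of whether the folding score of a general law-invariant coherent risk measure is finite --- which the paper explicitly leaves open (Remark \ref{rem:1}); Theorem \ref{th:main} covers only distortion risk measures. So this route is not merely delicate, it is blocked.

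The paper's argument goes in the opposite direction from yours: instead of combining $\rho$ and $\rho'$ upward into a larger risk measure, it extracts distortion \emph{minorants}. From the Kusuoka representation $\rho=\sup_{h\in\mathcal H_\rho}\rho_h$ and the fact that non-finiteness on $L^1$ forces $\rho$ not to be expectation-dominated, one finds for each $n$ some $h_n\in\mathcal H_\rho$ with $\lim_{t\downarrow0}h_n(t)/t>2^n$ (using the equivalence $\rho_h\le c\E$ on $L^1_+$ iff $\lim_{t\downarrow0}h(t)/t\le c$); then $g=\sum_n 2^{-n}h_n$ is a concave distortion function in $\mathcal D_c$ with $\rho\ge\rho_g$, and similarly $\rho'\ge\rho_f$ with $f\in\mathcal D_c$. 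Setting $\ell=g\wedge f\in\mathcal D_c$ gives $\rho_\ell\le\rho_g\wedge\rho_f$, so the upper bounds on $\rho(X)$ and $\rho'(-X)$ transfer to upper bounds on $\rho_\ell(X)$ and $\rho_\ell(-X)$; lower bounds come for free from $\rho_\ell(X)+\rho_\ell(-X)\ge0$, and Theorem \ref{th:3}(iii) closes the argument. Note that minorants interact correctly with the one-sided hypotheses of (iv) precisely where your majorant/average does not, and they land you back in the distortion class where Theorem \ref{th:3} applies, avoiding the open folding-score question entirely. This extraction of a $\mathcal D_c$-minorant is the missing idea in your proposal.
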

   \begin{proof}
   Any coherent distortion risk measure $\rho_h$ is lower semicontinuous. 
By Theorem \ref{th:finite}, $\rho_h \in \mathcal R$  if and only if   $h \in \mathcal D_c$.
Therefore, by using Theorem \ref{th:3} and taking $\rho=\rho_h$, we get (i)$\Rightarrow$(ii)$\Rightarrow$(iii).
The direction (iii)$\Rightarrow$(iv) is trivial. It remains to prove (iv)$\Rightarrow$(i).

 We first show that for 
  any coherent distortion risk measure $\rho_h$ and $c>0$, we have 
  \begin{align}
  \label{eq:th4-1}
\mbox{$\rho_h\le c\E$ holds on $L^1_+$} \iff 
\lim_{t\downarrow 0} h(t)/t \le c .
\end{align}
A weaker version of this equivalence was used in the proof of Theorem \ref{th:finite}.
Suppose  $\lim_{t\downarrow 0} h(t)/t\le c$.
  Since $h(t)/t$ is decreasing in $t$, 
  we know  
  $h(t)\le c t$ for $t\in [0,1]$.
Hence, $\rho_h(X) \le c \E[X]$ for $X\in L^1_+$.
Conversely, if  $\rho_h\le c\E$ on $L^1_+$, then $\rho_h(B_t)/t \le c $, 
where $B_t$ is a Bernoulli random variable with $\E[B_t]=t$. This implies $h(t)/t$ is bounded above by $c$. Since $h(t)/t$ is  decreasing in $t$, it has a finite limit bounded by $c$ as $t\downarrow 0$. 
 
Note that a lower semicontinuous and law-invariant coherent risk measure $\rho$  on $L^1$ admits a Kusuoka representation (\cite{K01}; see \cite{FS12} for its extension to $L^1$), that is, 
\begin{align}\label{eq:coh-rep}
\rho= \sup_{h\in\mathcal H_\rho} \rho_h \mbox{~
for a set $\mathcal H_\rho$ of concave distortion functions}.
\end{align} 
Since $\rho$ is not finite on $L^1$, it cannot be expectation-dominated; this follows by the same argument of {(i)$\Rightarrow$(ii)} in the proof of Theorem \ref{th:finite}. That is, for every $c>0$, $\rho \le c\E $ on $L^1_+$ does not hold.
Hence, for each $n$,
there exists $h_n\in \mathcal H_\rho$ such that $\rho_{h_n}\le 2^n \E $ on $L^1_+$ does not hold, and by using \eqref{eq:th4-1}, this implies $\lim_{t\downarrow 0} h_n(t)/t>2^n.$ 
Write $g=\sum_{n=1}^\infty 2^{-n} h_{n}$.
Since $g$ is a convex combination of concave distortion functions, it is also a concave distortion function. 
Moreover, we can compute (via Fubini's theorem)
$$
\lim_{t\downarrow 0} \frac{g(t)}{t}
=  \sum_{n=1}^\infty  2^{-n} \lim_{t\downarrow 0} \frac{h_{n}(t)}{t}  
\ge \sum_{n=1}^\infty 1 =\infty.
$$
Therefore, $g\in \mathcal D_c$.   Moreover, $\rho\ge \rho_g$ since $\rho\ge \rho_{h_{n}}$  for each $n\in \N$. 
Similarly, there exists $f\in \mathcal D_c$ such that $\rho'\ge \rho_f$. 


Take $\ell = g\wedge f$.
We can see that $\ell$ is a concave distortion function, and $\ell(t)/t = (g(t)\wedge f(t))/t \to \infty$ as $t\downarrow 0$.
Therefore, $\ell \in \mathcal  D_c$.
Moreover, $\rho_\ell\le \rho_g\wedge \rho_f$ by definition of $\ell$.
Hence,   
the boundedness  from above of $\rho(X)$ and $\rho'(-X)$  for $X\in \mathcal S$ implies that $\rho_\ell(X)$ and $\rho_\ell(-X)$ are both bounded above; they are also bounded below because $\rho_\ell(X)+\rho_\ell(-X)\ge 0$ for all $X\in L^1$. 
By  using Theorem \ref{th:3},  we know that $\mathcal S$ is uniformly integrable. 
   \end{proof}

  The implication (iv)$\Rightarrow$(i) will be useful in the application in Section \ref{sec:investment}, where one of the risk measures represents a price functional.

 Let $\rho $ be a lower semicontinuous and law-invariant coherent risk measure on $L^1$. 
 In  the direction (iii)$\Rightarrow$(i) of Theorem \ref{th:3-ch}, it is required that $\rho$ is not finite on $L^1$. It may be natural to ask whether this requirement is also necessary. 
 Clearly, if $\rho$ is expectation-dominated, then $\sup_{X\in \mathcal S}\rho(|X|)<\infty$ for any $\mathcal S\subseteq L^1_+$ with bounded expectation, which is not sufficient for uniform integrability. Therefore, for the implication (iii)$\Rightarrow$(i),   $\rho$ is necessarily  not expectation-dominated. 
It turns out that this is equivalent to requiring $\rho$ to be not finite on $L^1$.
The next result on the finiteness of such risk measures extends Theorem \ref{th:finite} to the larger class of law-invariant coherent risk measures. For other results on spaces that are finite for law-invariant risk measures, see \cite{LS17}.
\begin{proposition}\label{prop:finite2}
A law-invariant coherent risk measures on $L^1$ is finite if and only if 
it is expectation-dominated. 
\end{proposition}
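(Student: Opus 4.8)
The plan is to prove both implications directly, without passing through a Kusuoka representation. For the ``if'' part (expectation-dominated $\Rightarrow$ finite) I would reuse the argument of (i)$\Rightarrow$(ii) in the proof of Theorem~\ref{th:finite}: if $\rho\le c\,\E$ on $L^1_+$ for some $c>0$, then for any $X\in L^1$ we have $\rho(|X|)\le c\,\E[|X|]<\infty$; normalization ($\rho(0)=0$) together with convexity gives $\rho(-|X|)\ge 2\rho(0)-\rho(|X|)=-\rho(|X|)>-\infty$; and monotonicity yields $-\infty<\rho(-|X|)\le\rho(X)\le\rho(|X|)<\infty$, so $\rho$ is finite on $L^1$. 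This is the routine half.

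For the ``only if'' part I would argue by contraposition and build a single ``bad'' random variable. Suppose $\rho$ is not expectation-dominated; then for every $c>0$ there is some $Y\in L^1_+$ with $\rho(Y)>c\,\E[Y]$, and necessarily $\E[Y]>0$ (otherwise $Y=0$ a.s.\ and $\rho(Y)=\rho(0)=0$). Applying this with $c=4^n$ produces $Y_n\in L^1_+$ with $\E[Y_n]\in(0,\infty)$ and $\rho(Y_n)>4^n\E[Y_n]$. Rescaling by positive homogeneity, set $Z_n=Y_n/(2^n\E[Y_n])$, so that $\E[Z_n]=2^{-n}$ and $\rho(Z_n)=\rho(Y_n)/(2^n\E[Y_n])>2^n$ (this identity is valid in $(-\infty,\infty]$, so the case $\rho(Y_n)=\infty$ needs no separate treatment). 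Finally put $X=\sum_{n=1}^\infty Z_n$: the partial sums increase to $X$ pointwise and $\E[\sum_{n=1}^N Z_n]=\sum_{n=1}^N 2^{-n}<1$, so monotone convergence gives $X\in L^1_+$ with $\E[X]=1$. Since $X\ge Z_n$ for each fixed $n$, monotonicity gives $\rho(X)\ge\rho(Z_n)>2^n$ for all $n$, whence $\rho(X)=\infty$ and $\rho$ is not finite on $L^1$.

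I do not expect a genuine obstacle here; the proposition is essentially a repackaging of the construction behind Theorem~\ref{th:finite}. The only points needing mild care are guaranteeing that the series $\sum_n Z_n$ really defines an element of $L^1$ (handled by the geometric normalization $\E[Z_n]=2^{-n}$ plus monotone convergence) and disposing of the degenerate case $\E[Y]=0$. It is worth noting that the construction uses only positive homogeneity, monotonicity, and normalization, so it in fact shows that every positively homogeneous, monotone, normalized risk measure on $L^1$ that is finite on $L^1$ is expectation-dominated; law invariance and convexity enter only in the ``if'' direction. If one preferred to stay closer to the proof of Theorem~\ref{th:3-ch}, an alternative is to invoke the Kusuoka representation $\rho=\sup_{h\in\mathcal H_\rho}\rho_h$ and apply the equivalence~\eqref{eq:th4-1} together with Theorem~\ref{th:finite} to the components $\rho_h$, but that route requires lower semicontinuity (or the automatic continuity on $L^1$ of finite law-invariant convex risk measures) and is less self-contained.
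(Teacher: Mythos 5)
Your proof is correct, and the ``only if'' half takes a genuinely different route from the paper's. The paper proves that direction through the Kusuoka representation \eqref{eq:coh-rep}: assuming $\rho$ is finite but not expectation-dominated, it borrows the construction from the proof of Theorem~\ref{th:3-ch} to produce $g\in\mathcal D_c$ with $\rho\ge\rho_g$, and then invokes Theorem~\ref{th:finite} to conclude that $\rho_g$, hence $\rho$, fails to be finite on $L^1$ --- a contradiction. That route uses law invariance and the availability of the representation, and, one level down, the proof of Theorem~\ref{th:finite} relies on comonotonic additivity to accumulate the contributions of the bad random variables, since its normalization only yields $X\ge 2^{-n}X_{t_n}$ and plain monotonicity would give no more than $\rho(X)>1$. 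Your double scaling (taking $c=4^n$ in the failure of domination and then dividing by $2^n\E[Y_n]$) is exactly what removes that obstacle: it gives $\rho(Z_n)>2^n$ while keeping $X\ge Z_n$ pointwise, so monotonicity alone yields $\rho(X)\ge\rho(Z_n)>2^n$ for every $n$. As you observe, this direction of your argument uses only monotonicity, positive homogeneity and normalization, so it actually proves the stronger statement that any such functional on $L^1$ that is finite there is expectation-dominated, with law invariance and convexity needed only for the routine ``if'' half. What the paper's heavier route buys is the intermediate object $\rho_g$ with $g\in\mathcal D_c$ dominated by $\rho$, which is genuinely needed elsewhere (it drives the implication (iv)$\Rightarrow$(i) of Theorem~\ref{th:3-ch}); for Proposition~\ref{prop:finite2} taken in isolation, your construction is shorter, more elementary, and more general.
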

\begin{proof}
The ``if" statement follows from the same argument of {(i)$\Rightarrow$(ii)} in the proof of Theorem \ref{th:finite}.
Below we show the ``only if" statement. 
 Note that any law-invariant coherent risk measure that is finite on $L^1$ admits a Kusuoka representation in the form of \eqref{eq:coh-rep}.  
From the arguments in the proof of Theorem \ref{th:3-ch}, we see that $\rho\ge \rho_g$ for some $g\in \mathcal D_c$.    By Theorem \ref{th:finite}, this implies that $\rho_g$ is not finite on $L^1$.  Hence, we conclude that $\rho$ is also not finite on $L^1$. 
\end{proof}

 Let $\rho $ be a lower semicontinuous and law-invariant coherent risk measure.
 Proposition \ref{prop:finite2} implies that the set $\mathcal R$ contains precisely such $\rho$ that are not expectation-dominated. Moreover, the non-finiteness of $\rho$ on $L^1$ is necessary and sufficient 
 for the implication (iii)$\Rightarrow$(i).

   \section{Some consequences of the main results}
\label{sec:application}

   In this section we present two applications, both following directly from the condition of uniform integrability.
   The first corollary concerns a weak law of large numbers. 

\begin{corollary}\label{cor:lln}
Let  
 $(X_n)_{n\in \N}$ be  a sequence  of pairwise independent random variables with zero mean and $Y_n = (X_1 + \dots + X_n)/n$ for each $n\in \N$. If $\sup_{n\in \N} (\rho(X_n)
 \vee\rho'(-X_n)) <\infty$ for some $\rho,\rho' \in\mathcal R$, then $Y_n\to 0$ in probability  as $n\to\infty$.    
\end{corollary}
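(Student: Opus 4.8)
The plan is to deduce the weak law of large numbers from uniform integrability plus the classical $L^1$-weak law for pairwise independent, mean-zero, uniformly integrable sequences. First I would invoke Theorem \ref{th:3-ch}, specifically the implication (iv)$\Rightarrow$(i): the hypothesis $\sup_{n\in\N}(\rho(X_n)\vee\rho'(-X_n))<\infty$ says that $\rho(X_n)$ and $\rho'(-X_n)$ are bounded from above for the set $\mathcal S=\{X_n:n\in\N\}$, with $\rho,\rho'\in\mathcal R$, so $\mathcal S$ is uniformly integrable. In particular $\mathcal S\subseteq L^1$ and $\sup_{n}\E[|X_n|]<\infty$.

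Next I would run the standard truncation argument. Fix $\epsilon>0$ and $K>0$, and write $X_n = X_n\id_{\{|X_n|\le K\}} + X_n\id_{\{|X_n|> K\}} =: X_n' + X_n''$. Let $\mu_n' = \E[X_n']$ and $\mu_n'' = \E[X_n''] = -\mu_n'$ (using $\E[X_n]=0$). Uniform integrability gives $\sup_n \E[|X_n''|]\le \eta(K)$ with $\eta(K)\to0$ as $K\to\infty$; in particular $|\mu_n'|\le\eta(K)$ uniformly in $n$. For the truncated part, pairwise independence yields $\var\big(\sum_{k=1}^n (X_k'-\mu_k')\big) = \sum_{k=1}^n \var(X_k')\le \sum_{k=1}^n \E[(X_k')^2]\le nK\sup_k\E[|X_k'|]\le nK\sup_k\E[|X_k|]$, so by Chebyshev, $\frac1n\sum_{k=1}^n(X_k'-\mu_k')\to0$ in probability (with a bound of order $K/n$). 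Meanwhile $\big|\frac1n\sum_{k=1}^n(X_k''-\mu_k'')\big|\le \frac1n\sum_{k=1}^n|X_k''| + \eta(K)$, and $\E\big[\frac1n\sum_{k=1}^n|X_k''|\big]\le\eta(K)$, so by Markov this term is at most $3\eta(K)$ with probability at least $1-1/3$ once, say, I choose $K$ large — more precisely, $\p(\frac1n\sum|X_k''|>\epsilon/3)\le 3\eta(K)/\epsilon$.

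Assembling: given $\epsilon>0$, first pick $K$ so that $\eta(K)$ is small enough that both $\eta(K)<\epsilon/3$ and $3\eta(K)/\epsilon<\epsilon/2$ (say), making the contribution of the tail part $X_n''$ and the centering drifts negligible uniformly in $n$; then pick $N$ so that for $n\ge N$ the Chebyshev bound for the truncated variance part is below $\epsilon/2$. Combining via a union bound gives $\p(|Y_n|>\epsilon)\le\epsilon$ for all $n\ge N$, hence $Y_n\to0$ in probability.

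I expect the only genuinely nontrivial step to be the appeal to Theorem \ref{th:3-ch}(iv)$\Rightarrow$(i) to extract uniform integrability from the one-sided coherent-risk-measure bounds; once uniform integrability is in hand, the rest is the textbook proof of the $L^1$ weak law of large numbers for pairwise independent sequences (essentially Etemadi-style truncation, but only in probability and only requiring pairwise independence so that variances of sums add). No ergodic or almost-sure machinery is needed. The main thing to be careful about is bookkeeping the order of choosing $K$ then $N$, and noting that $\E[(X_k')^2]\le K\,\E[|X_k'|]$ keeps the variance bound linear in $n$ so that division by $n^2$ kills it.
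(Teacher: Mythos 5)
Your proof is correct and follows essentially the same route as the paper: both reduce the corollary to uniform integrability of $\{X_n : n\in\N\}$ via Theorem \ref{th:3-ch} (iv)$\Rightarrow$(i), and then invoke the weak law of large numbers for pairwise independent, mean-zero, uniformly integrable sequences. The only difference is that the paper simply cites Landers and Rogge (1987) for that second step, whereas you reprove it by the standard truncation--Chebyshev argument (which is valid, since truncation preserves pairwise independence and hence additivity of variances).
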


\begin{proof}
For a sequence of pairwise independent random variables with zero mean, uniform integrability ensures the weak law of large numbers; see \cite{LR87}. 
The statement then follows from Theorem \ref{th:3-ch}.
\end{proof}
 
\begin{remark} \label{rem:2} It is known that, for a sequence of independent random variables with zero mean,
uniform integrablity is not sufficient for the strong law of large numbers; see the example in \cite{LR87}. Since our condition in Corollary \ref{cor:lln} is equivalent to uniform integrability,  we cannot change the statement of convergence in probability to that of almost sure convergence.
A remaining question is to find a stronger condition on $\rho $ than $\rho\in \mathcal R $ such that 
$$
\sup_{n\in \N} (\rho(X_n)
 \vee\rho(-X_n)) <\infty ~~\Longrightarrow ~~\frac 1n \sum_{i=1}^n X_i \to 0 \mbox{~a.s.}
$$
for an independent (or pairwise independent) sequence $(X_n)_{n\in \N}$ with zero mean. 
\end{remark}

The second corollary concerns the convergence of ES under convergence in distribution.

\begin{corollary}
Suppose that  a sequence 
 $(X_n)_{n\in \N}$  converges to $X\in L^1$ in distribution and satisfies $\sup_{n\in \N} (\rho(X_n)
 \vee\rho'(-X_n)) <\infty$ for some $\rho,\rho' \in\mathcal R$. Then $\ES_p(X_n)\to \ES_p(X)$  for all $p\in (0,1)$.  
\end{corollary}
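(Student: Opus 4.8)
The plan is to deduce this corollary from the previous one's underlying mechanism: under the hypotheses, Theorem~\ref{th:3-ch} (specifically the implication (iv)$\Rightarrow$(i), applied to the set $\mathcal S = \{X_n : n\in\N\}$) gives that $\mathcal S$ is uniformly integrable. Convergence in distribution $X_n \lawcn X$ together with uniform integrability of $\{X_n\}$ is precisely the classical criterion for convergence in $1$-Wasserstein distance, equivalently $\E[g(X_n)] \to \E[g(X)]$ for every continuous $g$ with at most linear growth. So the real content is: uniform integrability plus convergence in distribution implies $\ES_p(X_n)\to\ES_p(X)$ for each fixed $p\in(0,1)$.

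First I would record that uniform integrability of $\{X_n\}$ and $X_n\lawcn X$ imply $\E[|X_n|] \to \E[|X|] < \infty$, so in particular $X\in L^1$ is consistent, and more usefully that the laws $\mu_n$ of $X_n$ converge to the law $\mu$ of $X$ in $W_1$. Then I would invoke the representation \eqref{eq:lem:2}, or equivalently the dual/Kusuoka-type formula $\ES_p(X) = \E[X] + \frac{1}{1-p}\E[(X - \VaR_p(X))_+]$ when $X$ has no atom at its $p$-quantile, but the cleanest route is through the quantile representation $\ES_p(X) = \frac{1}{1-p}\int_p^1 \VaR_q(X)\,\d q$. Since $X_n \lawcn X$, the quantile functions $\VaR_q(X_n) \to \VaR_q(X)$ for every continuity point $q$ of $q\mapsto\VaR_q(X)$, hence for a.e.\ $q$. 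To pass the limit inside the integral $\int_p^1$ I need domination or uniform integrability of the family $\{q\mapsto \VaR_q(X_n)\}_n$ on $(p,1)$ with respect to Lebesgue measure; this is exactly what uniform integrability of $\{X_n\}$ delivers, because $\int_p^1 |\VaR_q(X_n)|\,\d q$ is controlled by $(1-p)(\ES_p(X_n)\vee\ES_p(-X_n))$ and the tail $\sup_n \int_A |\VaR_q(X_n)|\,\d q \to 0$ as $|A|\to 0$ translates, via \eqref{eq:lem:2} and Theorem~\ref{th:1}, into the uniform integrability statement $\sup_n (1-p)\ES_p(|X_n|) \to 0$ as $p\uparrow 1$. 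A generalized dominated convergence / Vitali convergence theorem then yields $\int_p^1 \VaR_q(X_n)\,\d q \to \int_p^1 \VaR_q(X)\,\d q$, i.e.\ $\ES_p(X_n)\to\ES_p(X)$.

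I expect the main obstacle to be the interchange of limit and integral in the quantile representation — specifically, justifying it cleanly despite the quantile function of $X$ possibly having (countably many) discontinuities and despite $\VaR_q(X_n)$ not being dominated by a single integrable function. The fix is to apply the Vitali convergence theorem on the finite measure space $((p,1),\mathrm{Leb})$: a.e.\ pointwise convergence of $\VaR_q(X_n)$ plus uniform integrability of the family in $q$ gives $L^1((p,1))$ convergence. Both ingredients must be verified: the pointwise part is the standard fact that convergence in distribution implies convergence of quantile functions at continuity points, and the uniform integrability part is extracted from Theorem~\ref{th:1}(ii) applied to the (already established) uniformly integrable set $\mathcal S$, noting $\int_A \VaR_q(|X_n|)\,\d q \le \int_{1-|A|}^1 \VaR_q(|X_n|)\,\d q = |A|\,\ES_{1-|A|}(|X_n|)$, whose supremum over $n$ vanishes as $|A|\to 0$. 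Once these are in place the conclusion is immediate, and since $p\in(0,1)$ was arbitrary the corollary follows for all $p$.
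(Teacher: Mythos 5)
Your proposal is correct and follows essentially the same route as the paper: both first invoke Theorem~\ref{th:3-ch} to get uniform integrability of $\{X_n\}$, then combine a.e.\ convergence of the quantile functions (the paper phrases this as $Q_n(U)\to Q(U)$ in probability) with uniform integrability of that family to obtain $L^1$-convergence of the quantiles, and finally restrict to $(p,1)$ and use the integral representation of $\ES_p$. Your use of Vitali's theorem directly on $((p,1),\mathrm{Leb})$ is just the same argument transported from the probability space to the unit interval, so no substantive difference.
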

\begin{proof}
Let $U$ be a uniform random variable on $[0,1]$. 
For $n\in \N$, 
denote by $Q_n(t)=\VaR_t(X_n)$ 
and $Q(t)= \VaR_t(X)$ for $t\in (0,1)$. Note that 
$Q_n(U) \laweq X_n$
and $Q(U)\laweq X$. 
Using Theorem \ref{th:3-ch}, we know that $\{X_n:n\in\N\}$ is uniformly integrable, so 
$\{Q_n(U):n\in\N\}$ is also uniformly integrable.
 The convergence $X_n\to X$ in distribution as $n\to \infty$   implies that $Q_n(U)\to Q(U)$ in probability.
This and uniform integrability of $\{Q_n(U):n\in\N\}$ guarantee that $Q_n(U)\to Q(U)$ in $L^1$. 
Hence, 
$\int_0^1 |Q_n(u)-Q(u)| \d u \to 0, $
which implies 
$\int_p^1 |Q_n(u)-Q(u)| \d u \to 0.$
Therefore, by law invariance of ES, we have 
$\ES_p(X_n)=\ES_p(Q_n(U))\to \ES_p(Q(U))=\ES_p(X).$ 
\end{proof}

Next, we present a result that will be useful in the application in Section \ref{sec:investment}.  
Define the $1$-Wasserstein distance  between two distributions $F$ on a space  $\mathfrak X$ and $G$ on a space  $\mathfrak Y$ by
$$
W^1(F,G) = \inf_{\pi \in \Pi(F,G)} \int_{\mathfrak X\times \mathfrak Y} |x-y|  \pi (\d x,\d y),
$$
where $\Pi(F,G)$ is the set of distributuons on $\mathfrak X\times \mathfrak Y$ with marginal distributions $F$ and $G$. 
The $1$-Wasserstein distance belongs to the  class of $p$-Wasserstein distances 
commonly used in optimal transport theory (e.g., \citet[Chapter 2]{R13}). 
When $F$ and $G$ are distributions on $\R$, we identify them with the corresponding cumulative distribution functions. 
In this case, 
$W^1(F,G)$ has an explicit formula 
$$W^1(F,G) =\int_0^1 |F^{-1}(t) -  G^{-1}(t) |\d t,$$
where $F^{-1}$ is the quantile function of $F$, that is, 
$F^{-1}(t)=\inf \{x\in \R: F(x)\ge t\}$ 
for $t\in (0,1)$. 
We will denote by $w^1$ the corresponding pseudometric on $L^1\times L^1$, that is, $$w^1(X,Y) 
=\int_0^1 |\VaR_t(X) - \VaR_t(Y) |\d t =W^1(F,G),
$$ 
where $F$ and $G$ are the distributions of $X$ and $Y$, respectively. Note that $w^1(X,Y)\le \E[|X-Y|]$ for any $X,Y$, and equality holds if $X$ and $Y$ are comonotonic. Using this fact and \citet[Theorem 4.6.3]{D19}, one can check that,  as $n\to\infty$,
\begin{equation}
\label{eq:R1-2}
\begin{aligned}
 w^1(X_n,X)\to 0 &\iff X^*_n\to X \mbox{ in $L^1$} 
 \\ & \iff X_n
\to X \mbox{ in distribution and $(X_n)_{n\in \N}$  is uniformly integrable},
\end{aligned}
\end{equation}
where $X_n^*$ is identically distributed as $X_n$ 
and $X,X_n^*$ are comonotonic  for each $n\in \N$; such $X_n^*$ can be constructed on the same probability space as $X$ as long as the space is atomless. 
Note that in \eqref{eq:R1-2}, 
$\E[|X_n^*-X|]=w^1(X_n^*,X) \le \E[|X_n-X|]$ and thus the distribution of $(X^*_n, X)$ is a minimizer in the definition of the $1$-Wasserstein distance.

\begin{corollary}
\label{coro:R1-1}
Suppose that a sequence 
 $(X_n)_{n\in \N}$   satisfies $\sup_{n\in \N} (\rho(X_n)
 \vee\rho'(-X_n)) <\infty$ for some $\rho,\rho' \in\mathcal R$. Then 
 there exists a subsequence $(X_{n_k})_{k\in \N}$ 
 that converges  
  in $w^1$.  
 \end{corollary}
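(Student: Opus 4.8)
The plan is to deduce uniform integrability of the family $\mathcal S=\{X_n:n\in\mathbb{N}\}$ from the two one-sided bounds in the hypothesis, extract a distributionally convergent subsequence by tightness, and then read off $w^1$-convergence from the equivalence \eqref{eq:R1-2}.

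First I would invoke the implication (iv)$\Rightarrow$(i) of Theorem \ref{th:3-ch}: since $\rho(X_n)$ and $\rho'(-X_n)$ are bounded from above over $n\in\mathbb{N}$ for some $\rho,\rho'\in\mathcal R$, the set $\mathcal S$ is uniformly integrable. In particular $\sup_{n}\E[|X_n|]<\infty$, so Markov's inequality shows that $\mathcal S$ is tight.

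Next, by Prokhorov's theorem there is a subsequence $(X_{n_k})_{k\in\mathbb{N}}$ converging in distribution to some probability law, which we realize as a random variable $X$ on the atomless space $(\Omega,\mathcal F,\p)$. Uniform integrability of $(X_{n_k})_k$ together with convergence in distribution gives $\E[|X|]\le\liminf_k\E[|X_{n_k}|]<\infty$ (via Skorokhod representation, or lower semicontinuity of $Y\mapsto\E[|Y|]$ under weak convergence), so $X\in L^1$ and $w^1(\cdot,X)$ is well defined. The subsequence $(X_{n_k})_k$ is still uniformly integrable, being a subfamily of $\mathcal S$, and it converges in distribution to $X\in L^1$; by the equivalence \eqref{eq:R1-2} this is precisely the assertion that $w^1(X_{n_k},X)\to 0$ as $k\to\infty$, which is the claim.

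The only step that carries real content is the passage from the one-sided bounds on $\rho(X_n)$ and $\rho'(-X_n)$ to uniform integrability of $\mathcal S$, and this is exactly Theorem \ref{th:3-ch}; everything afterward is the standard Prokhorov-plus-uniform-integrability argument, and no additional difficulty arises since $w^1$ depends only on the one-dimensional distributions.
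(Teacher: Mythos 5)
Your proof is correct and follows essentially the same route as the paper: apply Theorem \ref{th:3-ch} (iv)$\Rightarrow$(i) to get uniform integrability, extract a distributionally convergent subsequence by tightness, and upgrade to $w^1$-convergence via uniform integrability. The only cosmetic difference is that you cite the equivalence \eqref{eq:R1-2} directly (after checking $X\in L^1$), whereas the paper re-derives that step explicitly through the quantile coupling $F_n^{-1}(U)\to F^{-1}(U)$ and $L^1$-convergence; the substance is identical.
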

\begin{proof}
First, Theorem \ref{th:3-ch} guarantees that that $\{X_n:n\in\N\}$ is uniformly integrable.
Note that   uniform integrability implies $ x \sup_{n\in \N}\p(|X_n|>x) \to 0$ as $x\to \infty$, which in term implies $ \sup_{n\in \N}\p(|X_n|>x) \to 0$ as $x\to\infty$. Therefore, the sequence $\{F_n:n\in\N\}$ of distributions of $\{X_n:n\in\N\}$
is tight. 
Hence, $\{F_n: n\in \N\}$ has a convergent subsequence to some distribution $F$; see e.g., Theorem  3.10.3 of \cite{D19}.
Let $U$ be a uniformly distributed random variable on $[0,1]$
and $X=F^{-1}(U)$.
Clearly,
$
F^{-1}_n(U) \to F^{-1}(U) 
$ almost surely (this is implied also by the Skorokhod representation theorem; see Theorem 6.7 of \cite{B99}), and 
the set $\{F^{-1}_n(U):n\in \N\}$
 is uniformly integerable since $\{X_n:n\in\N\}$ is uniformly integrable.
 Therefore, $F^{-1}_n(U) \to X$ in $L^1$ by Theorem 4.6.3 of \cite{D19}. 
 This means $X_n\to X$ in $w^1$ because $w^1(X_n,X)= W^1(F_n,F) =\E[|F^{-1}_n(U)-X|]$.     
\end{proof}

\section{An application in investment optimization}
\label{sec:investment}

We consider an investment problem where a decision maker tries to maximize an expected utility function $u$ of two variables subject to a risk constraint.
We assume that the risk of random losses is assessed by 
  a lower semicontinuous and law-invariant coherent risk measure $\rho$ on $L^1$ that is not finite on $L^1$, i.e., $\rho \in \mathcal R$. 
As an example, $\rho$ may be a distortion risk measure $\rho_h$
with a concave distortion function satisfying $\lim_{t\downarrow 0} h(t)/t\to \infty$, such as  IES in Example \ref{ex:ies}. 
Moreover, we use another functional 
$P\in \mathcal R$ to represent the ask price of a financial asset; that is, $P(-X)$ is the price to purchase a random  loss $X$ with payoff  $-X$. 
Using a coherent risk measure for  the ask price of financial assets is a common setup in \emph{conic finance}; see \cite{MC10} for an introduction. Coherent distortion risk measures in $\mathcal R$, such as \begin{align}
P(X)=\rho_h (X) =\alpha \int_0^1 (1-t)^{\alpha-1}   \VaR_t(X) \d t, \mbox{~~with $h(t)=t^{\alpha}$},
\end{align} 
are used as ask prices in \cite{MC10}. 
To  connect to our main results, we assume that both the price functional and the risk measure are law invariant, and thus we do not consider the pricing kernel.  Note that $P \ge \E $ for
$P\in \mathcal R$, and thus the ask price is more expensive than the mean payoff, a sensible assumption.


Let $\mathcal G$ and $\mathcal G^\perp$ 
be two  independent sub-$\sigma$-fields of $\mathcal F$
such that $(\Omega,\mathcal G,\p)$ and $(\Omega,\mathcal G^\perp,\p)$ are atomless.
A stylized investment optimization problem can be formulated 
as  \begin{equation}
    \label{eq:R1-opt}
    \begin{aligned} 
\mbox{maxmize}~~~ & \E[u(-X,Y)]\\
\mbox{subject to}~~~ & X\in L^1(\mathcal G); ~\rho(X)\le r_0;~ P(-X) \le x_0,    
\end{aligned}
\end{equation}
where  
$Y\in L^1(\mathcal G^{\perp})$, 
 $f:\R^2\to\R$ is measurable,
 and $r_0,x_0\in \R$.  
In the model \eqref{eq:R1-opt}, 
\begin{enumerate}[(a)]
    \item 
$u$ represents 
 a bivariate utility function that the decision maker aims to maximize; 
 \item $Y $ represents some independent background risk outside the control of the decision maker (e.g., risks outside the financial market, such as labour income risk, or insurance risk);   
 \item 
  the decision variable $X$ is the risky position (random loss) that the decision maker takes in the financial market;
  \item $r_0$ is the   risk budget for the loss $X$ evaluated by the risk measure $\rho$;  
\item $x_0$ is the   budget of the decision maker priced by $P$;  
 \item $\mathcal G$ is the $\sigma$-field of a complete financial market, so that every payoff $X\in L^1(\mathcal G)$ is attainable. 
\end{enumerate} 
See \cite{BS01} for a similar utility optimization problem in a complete market with VaR and budget constraints. 
Independent background risk is common in  decision problems; see \cite{MPST24} for recent advances. 
In case $Y$ is continuously distributed, one may   conveniently set $\mathcal G^\perp=\sigma(Y)$.

We assume that $u$ is   Lipschiz continuous; that is, there exists $c>0$ such that 
$|u(x,y)-u(x',y')|\le c(|x-x'|+|y-y'|)$.
For a concrete example, we can consider 
$$
u(x,y)=v ( a x + b y)
$$
for a univariate utility function $v:\R\to\R$ with bounded derivative and $a$ and $b$ are two constants. 
We do not assume other properties of $u$ than Lipschitz continuity, and the infinite-dimensional optimization problem \eqref{eq:R1-opt} may be non-convex, non-monotone, and difficult to solve. 

In practice, the stochastic background risk $Y$ may be subject to uncertainty. Instead, the decision maker can observe some approximations $Y_n$ of $Y$ through  available data and statistical modeling. 
For example, if a simulation mechanism for $Y$ is available, then $Y_n$ may represent the simulated data up to step $n$. Another example is that $Y_n$ is an algorithmic approximation of $Y$ up to certain accuracy. 
We assume that $Y_n$ converges to $Y$ in $w^1$, because the Wasserstein distance is a common metric to quantify uncertainty in the distribution of a stochastic object; see e.g., \cite{BCZ22}.
Note that by \eqref{eq:R1-2}, $Y_n \to Y$ in $L^1$ is stronger than this assumption, which is equivalent to $Y_n\to Y$ in distribution plus uniform integrability. 

Suppose that, with only access to $Y_n\in L^1(\mathcal G^\perp)$, the decision can approximately solve the problem
 \begin{equation}
    \label{eq:R1-opt2}
    \begin{aligned} 
\mbox{maximize}~~~ & \E[u(-X,Y_n)]\\
\mbox{subject to}~~~ & X\in L^1(\mathcal G); ~\rho(X)\le r_0 ; ~P(-X)\le x_0.
\end{aligned}
\end{equation}
For $\epsilon\ge 0$, 
we say that $X^*$ is an $\epsilon$-optimizer of \eqref{eq:R1-opt2} if $\E[u(-X^*,Y_n)]\ge \sup_{X\in \mathcal A} \E[u(-X,Y_n)]-\epsilon$,
where  $\mathcal A =\{X\in L^1(\mathcal G): \rho(X)\le r_0;~P(-X)\le x_0\}$ is the set of all feasible decision variables.
Analogously, we define $\epsilon$-optimizer of \eqref{eq:R1-opt}. 
Our definition of $\epsilon$-optimizers allows for an additive error of $\epsilon$ on the optimal value. 
In practice, it can be much less costly to compute an approximate optimizer than an exact optimizer; see  \cite{V01} and \cite{ACGKMP12} for  general  treatments of approximate optimizers and algorithms. Clearly, $\epsilon=0$ corresponds to true optimizers. Note that for any $\epsilon>0$, an $\epsilon$-optimizer of  \eqref{eq:R1-opt2} exists.

 The next result shows that any sequence of $\epsilon$-optimizers converges to an $\epsilon$-optimizer of the original problem \eqref{eq:R1-opt}, justifying the relevance of using the approximation \eqref{eq:R1-opt2}.

\begin{proposition}\label{prop:invest}
Suppose that $Y_n\to Y$ in $w^1$,  $X_n$ is an $\epsilon_n$-optimizer of \eqref{eq:R1-opt2} with $\epsilon_n\ge 0$ for $n\in \N$,   $u$ is Lipschitz continuous, and $\rho,P\in \mathcal R$. 
Then any subsequence of  $(X_n)_{n\in \N}$ has a cluster point in $w^1$ that is an $\epsilon$-optimizer of \eqref{eq:R1-opt}, where $\epsilon=\limsup_{n\to\infty} \epsilon_n$. In particular, if $\epsilon_n\to 0$, then any convergent (in $w^1$) subsequence  converges to an  optimizer of \eqref{eq:R1-opt}. 
\end{proposition}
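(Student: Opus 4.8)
The plan is to establish the key fact that the feasible set $\mathcal A$ is $w^1$-precompact, and then exploit the Lipschitz continuity of $u$ together with the $w^1$-convergence of $Y_n$ to pass to the limit in the objectives. First I would observe that $\mathcal A = \{X \in L^1(\mathcal G): \rho(X) \le r_0,\ P(-X) \le x_0\}$. Since $\rho, P \in \mathcal R$, for any $X \in \mathcal A$ we have $\rho(X) \le r_0$ and $P(-X) \le x_0$, so $\sup_{X \in \mathcal A}(\rho(X) \vee P(-X)) < \infty$. Applying Corollary \ref{coro:R1-1} (with $\rho' = P$), any sequence in $\mathcal A$ has a $w^1$-convergent subsequence; moreover, since $\rho$ and $P$ are lower semicontinuous and $w^1$-convergence implies convergence in distribution, the $w^1$-limit $X_\infty$ of such a subsequence still satisfies $\rho(X_\infty) \le \liminf \rho(X_{n_k}) \le r_0$ and similarly $P(-X_\infty) \le x_0$; measurability with respect to $\mathcal G$ is preserved because one can realize the limit on the same atomless space $(\Omega,\mathcal G,\p)$ via the comonotone coupling in \eqref{eq:R1-2}. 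Hence $\mathcal A$ is $w^1$-sequentially compact (after identifying elements equal in distribution).

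Next I would control the objective along a convergent subsequence. Fix a subsequence $(X_{n_k})$; by the precompactness above, pass to a further subsequence (not relabeled) with $X_{n_k} \to X^* \in \mathcal A$ in $w^1$. Using \eqref{eq:R1-2}, choose comonotone copies so that (enlarging to a joint atomless space carrying $X^*$, $X^*_{n_k} \laweq X_{n_k}$, and $Y$) we have $X^*_{n_k} \to X^*$ in $L^1$ and $Y_n \to Y$ in $L^1$; along the way I would note $\E[u(-X_{n_k},Y_{n_k})] = \E[u(-X^*_{n_k},Y_{n_k})]$ by the relevant law-invariance (here one must be careful: $u$ is a function of two arguments, so one wants the joint distribution of $(X_{n_k},Y_{n_k})$ to match that of $(X^*_{n_k},Y_{n_k})$ — this holds because $X$ and $Y$ live on independent $\sigma$-fields $\mathcal G$ and $\mathcal G^\perp$, so the joint law factors and only the marginals matter). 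Then Lipschitz continuity gives
\begin{align*}
|\E[u(-X^*_{n_k},Y_{n_k})] - \E[u(-X^*,Y)]| \le c\,\E[|X^*_{n_k}-X^*|] + c\,\E[|Y_{n_k}-Y|] \to 0,
\end{align*}
so $\E[u(-X_{n_k},Y_{n_k})] \to \E[u(-X^*,Y)]$.

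Finally I would compare the optimal values. Since $X_{n_k}$ is an $\epsilon_{n_k}$-optimizer of \eqref{eq:R1-opt2}, $\E[u(-X_{n_k},Y_{n_k})] \ge \sup_{X \in \mathcal A}\E[u(-X,Y_{n_k})] - \epsilon_{n_k}$. For each fixed $X \in \mathcal A$, Lipschitz continuity gives $\E[u(-X,Y_{n_k})] \to \E[u(-X,Y)]$ (using $Y_n \to Y$ in $L^1$, recalling $X$ and $Y_n$ are independent so the joint law is the product and $\E[u(-X,Y_n)]$ depends only on the marginals), hence $\liminf_k \sup_{X\in\mathcal A}\E[u(-X,Y_{n_k})] \ge \E[u(-X,Y)]$ for every $X \in \mathcal A$, and taking the supremum over $X$, $\liminf_k \sup_{X\in\mathcal A}\E[u(-X,Y_{n_k})] \ge \sup_{X\in\mathcal A}\E[u(-X,Y)]$. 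Combining, $\E[u(-X^*,Y)] = \lim_k \E[u(-X_{n_k},Y_{n_k})] \ge \sup_{X\in\mathcal A}\E[u(-X,Y)] - \limsup_k \epsilon_{n_k} \ge \sup_{X\in\mathcal A}\E[u(-X,Y)] - \epsilon$, so $X^*$ is an $\epsilon$-optimizer of \eqref{eq:R1-opt}; the final assertion for $\epsilon_n \to 0$ is immediate.

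The main obstacle I anticipate is the bookkeeping around law-invariance in the two-variable objective: one needs to argue carefully that replacing $X_n$ by a comonotone copy $X^*_n$ of $X$ does not change $\E[u(-X_n,Y_n)]$, which relies on the independence of $\mathcal G$ and $\mathcal G^\perp$ so that the joint distribution of the decision variable and the background risk is always a product of marginals — and simultaneously that the comonotone coupling realizing $w^1$-convergence can be constructed on a space still carrying an independent copy of $Y$. Ensuring that the limiting $X^*$ is genuinely feasible (in $L^1(\mathcal G)$ and satisfying both constraints) via lower semicontinuity of $\rho$ and $P$ under distributional convergence is the other point requiring care, but it follows cleanly from the definition of $\mathcal R$ and \eqref{eq:R1-2}.
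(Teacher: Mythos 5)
Your proposal is correct and follows essentially the same route as the paper's proof: $w^1$-precompactness via Corollary \ref{coro:R1-1}, feasibility of the limit via comonotone copies combined with law invariance and lower semicontinuity of $\rho$ and $P$, and convergence of the objective via Lipschitz continuity together with the product structure coming from the independence of $\mathcal G$ and $\mathcal G^\perp$. The only cosmetic difference is that you handle the convergence of the optimal values by a pointwise liminf-then-sup argument, whereas the paper chains the inequalities with an explicit uniform error $c\,\delta_n$; both are valid.
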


\begin{proof}
The constraints in \eqref{eq:R1-opt2} guarantee that $\rho(X_n)$ and $P(-X_n)$ are bounded above.
By Corollary \ref{coro:R1-1}, any subsequence of $(X_n)_{n\in \N}$ has a convergent subsequence. Therefore, it suffices to consider the case $X_n \to X^*$ in $w^1$. Moreover, we can choose $X^*\in L^1(\mathcal G)$ because $w^1$ only concerns the distribution of $X^*$, and $(\Omega,\mathcal G,\p)$ is atomless. 

    By \eqref{eq:R1-2}, we can construct
    $X'_n \in L^1(\mathcal G)$ and $Y'_n\in L^1(\mathcal G^\perp)$ for each $n$, such that  
   $$X_n\laweq X_n';~ Y_n\laweq Y_n';~
   X_n'\to X^* \mbox{~and~} Y_n'\to Y \mbox{~in $L^1$}.$$
   The proof of Corollary \ref{coro:R1-1} justifies 
   $X'_n \to X^*$ almost surely, and also in $L^1$ due to uniform integrability. 
   By  lower semicontinuity and law invariance of $\rho$ and $P$, $$\rho(X^*) \leq \liminf_{n \to \infty} \rho(X'_n) = \liminf_{n \to \infty} \rho(X_n) \leq r_0,$$ and similarly $P(-X^*) \leq x_0$. Hence, $X^*\in \mathcal A$.

    To see that $X^*$ is $\epsilon$-optimal,  let
    $$
    \delta_n = w^1(X_n, X^*) \vee w^1(Y_n, Y) 
    = \E[|X_n'-X^*|] \vee \E[|Y_n'-Y|],
    $$
    which converges to $0$ as $n\to\infty$.
Using Lipschitz continuity of $f$ repeatedly and 
   the fact that $\E[f(Z,W)]$ for $Z\in L^1(\mathcal G)$
   and $W\in L^1(\mathcal G^\perp)$
   depends only on the laws of $Z$ and $W$, we have 
    \begin{align*}
        \E[f(X^*, Y)] &\ge \E[f(X^*, Y'_n)]
        -\E[c |Y-Y'_n|] 
        \\&\ge  \E[f(X^*, Y'_n)] 
        -c \delta_n \\
        &\ge   \E[f(X_n', Y'_n)]  - \E[c|X^*-X_n'|]
        -c \delta_n   \\
        &\ge   \E[f(X'_n, Y'_n)]    
        -2 c \delta_n   \\
        &=  \E[f(X_n, Y_n)]    
        -2 c \delta_n   \\
        & \ge  \sup_{X\in \mathcal A} \E[f(X, Y_n)]-\epsilon_n -2c \delta_n  \\
           & = \sup_{X\in \mathcal A} \E[f(X, Y'_n)]-\epsilon_n -2c \delta_n  \\
        &\ge  \sup_{X\in \mathcal A}  \E[f(X, Y)] - \E[c|Y'_n-Y|] -\epsilon_n - 2  c \delta_n      \ge  \sup_{X\in \mathcal A} 
 \E[f(X, Y)] -\epsilon_n - 3   c \delta_n.   
    \end{align*} 
Since $\delta_n\to 0$, by taking a limit as $n\to\infty$, we conclude that 
$\E[f(X^*,Y)] \ge \sup_{X\in \mathcal A} 
 \E[f(X, Y)]  -\epsilon$, and thus the desired $\epsilon$-optimality holds. 
\end{proof}

Under the stated conditions in Proposition \ref{prop:invest},  
  a true optimizer of \eqref{eq:R1-opt} always exists, and this can be seen  by setting $Y_n=Y$ for $n\in \N$, and using the fact that $(1/n)$-optimizers exist. 
  Proposition \ref{prop:invest} also implies that all cluster points of $(X_n)_{n\in \N}$ are $\epsilon$-optimizers of \eqref{eq:R1-opt2}.

To interpret Proposition \ref{prop:invest}, the decision maker may use a low-cost algorithm to compute an $\epsilon_n$-optimizer of \eqref{eq:R1-opt2} for some $n$  according to available computational resources and information of $Y_n$. Proposition \ref{prop:invest} guarantees that, by increasing $n$, any convergent subsequence provided by this procedure  converges to an $\epsilon$-optimizer of \eqref{eq:R1-opt}, which the decision maker does not have access to.

In the proof of  Proposition \ref{prop:invest}, our main results in Theorem \ref{th:3-ch} are used, through the conditions  $\rho(X)\le r_0$ and  $P(-X)\le x_0$,  to establish uniform integrability of $(X_n)_{n\in \N}$. This further allows the construction of $(X_n')_{n\in \N}$ which has $L^1$ convergence, guaranteeing $X^*\in \mathcal A$ as well as its optimality. 
Without Theorem \ref{th:3-ch} or the conditions $\rho(X)\le r_0$ and  $P(-X)\le x_0$, the above arguments fail as $(X_n)_{n\in \N}$ may not have any convergent subsequence.  

\begin{remark}
As we can see from the proof of  Proposition \ref{prop:invest}, 
the advantages brought by  
uniform integrability in optimization rely on the condition that the value function exhibits $L^1$-type continuity properties. In particular, the Lipschitz continuity of $u$ excludes functions  of the form $ u(x,y)=v ( a x + b y) $ where $v$ grows super-linearly. 
Admittedly, these assumptions appear to be quite restrictive, and they capture the additional guarantee  that uniform integrability offers. 
\end{remark}

\section{Conclusion}
\label{sec:6}

The main contribution of this paper is to establish a  connection between boundedness of risk measure values and uniform integrability. As a convenient technical tool, we obtained an upper bound on the folding score of each distortion risk measure (Theorem \ref{th:main}).
Three different sets of equivalent conditions for uniform integrability are obtained,   via ES (Theorem \ref{th:1}),    via distortion risk measures (Theorem \ref{th:3}) and    via law-invariant coherent risk measures (Theorem \ref{th:3-ch}).
Conditions in these results are stated both for  the absolute value of the random variables involved  and for the random variables themselves, facilitating easy interpretation in risk management.   
An application of investment optimization illustrates how our main results and the formulation with random losses instead of their absolute values are helpful to establish convergence of optimal decisions.
These results form a bridge between two important concepts, one in probability theory and one in financial mathematics. Moreover, they highlight the symmetric roles played by the mappings $X\mapsto \rho_h(X)$ 
and $X\mapsto \E[\phi(X)]$, well known in decision theory (\cite{Y87}). A mathematical duality between the distributional transforms underlying these two classes of mappings, in the sense that one class is characterized by commutation with the other, is recently obtained by \cite{CLW23}.

Some remaining questions concern boundedness of the folding score of general law-invariant coherent risk measures, 
and conditions on the ``testing'' risk measure for a strong law of large numbers. 
These two questions are discussed in Remarks  \ref{rem:2} and \ref{rem:1}.

\section*{Acknowledgements}
The authors thank the Editor, the Associate Editor,   two anonymous referees,  and Felix Liebrich for very helpful discussions and comments. 
RW acknowledges financial support from the Natural Sciences and Engineering Research Council of Canada (RGPIN-2018-03823 and CRC-2022-00141).
\section*{Competing Interests}  
The authors declare no competing interests.

\appendix\normalsize
\section{Technical discussions on folding score}
\label{app:folding}

In this appendix, we discuss some technical details related to the folding score and Theorem \ref{th:main} in Section \ref{sec:general}.
\subsection{Sharpness of the bound in Theorem \ref{th:main}}
Let us denote the upper bound in \eqref{eq:thmain} by $$s_{\rho_h}\le b_h = \frac{h(1/2)+1/2}{h(1/2)-1/2}.$$ Note that $b_h$ is decreasing in $h(1/2)$, and its smallest value $3$ is attained when $h(1/2)=1$.
We provide two examples, first showing that the bound $3$ cannot be improved, and   second showing that it is not always sharp for a given $h$.

\begin{example}[The upper bound $3$ is sharp for some $h$]\label{ex:1}
Let $\rho_h=\ES_p$ for $p\in [1/2,1)$. In this case, $h(1/2)=1$, and the upper bound in  \eqref{eq:thmain} is $3$.  
For $\epsilon \in (0,1)$, we can construct $X$ such that $s_{\rho_h}(X)$ is precisely $  3-\epsilon$, and this shows  $s_{\rho_h}=3$. Thus, the bound $3$ cannot be improved  for $\rho_h$.  
Let $X$ have a two-point distribution given by
$\p(X=-1)=1-w$ and $\p(X= 2(1-p)/w)=w$, where $w =  \epsilon({1-p})/({4 -\epsilon}) \in (0,1-p)$. We have $$\ES_p(|X|)=2+(1-p-w)\frac{1}{1-p} = 3-\frac{w}{1-p}=3-\frac{\epsilon}{4-\epsilon},$$ and $$\ES_p(X) \vee \ES_p(-X)= \big( 2-(1-p-w)\frac{1}{1-p}\big) \vee 1 = 1 + \frac{w}{1-p} =1+\frac{\epsilon}{4-\epsilon}. $$
 Therefore, $$
\frac{ \ES_p(|X|)} { \ES_p(X) \vee \ES_p(-X)} =
 \frac{3 (4-\epsilon) -{\epsilon} }{(4-\epsilon) +\epsilon} =   3-\epsilon.$$

\end{example}

\begin{example}[The upper bound in \eqref{eq:thmain} is  not sharp for some $h$]
Let $\rho_h=\ES_p$ for $p\in (0,1/2]$. In this case, $h(1/2)=(2(1-p))^{-1}$, and the upper bound in  \eqref{eq:thmain} is $2/p-1$.
Note that this bound increases to $\infty$ as $p\downarrow 0$. This is intuitive, as $\ES_0$ corresponds to the mean, whose folding score is infinity. 
The bound $2/p-1$ is not sharp. Take $p = 1/4$, and thus $2/p-1 = 7$. However, for any $z \in (0,1)$, $a+b = g(z)/h(z)+g(1-z)/h(1-z) \leq 1$, where $a,b$ are defined in the proof of Theorem \ref{th:main}. Hence, $s_{\rho_h} \leq (2+a+b)/(1-ab) \leq 3/(1-c) = 6$. 
\end{example}

Although the bound in \eqref{eq:thmain} is not always sharp, it suffices for our study in Section \ref{sec:UI}, where we only need $s_\rho<\infty$ as in  Proposition \ref{th:cor1}.

\subsection{Other families of risk measures}

Theorem \ref{th:main} shows  that,  
   for the large class of  coherent distortion risk measures $\rho$, we have 
\begin{align}\label{eq:thmain-2}
   s_{\rho} 
=\sup_{X\in \L}  \frac{\rho(|X|)}{|\rho(X)|\vee |\rho(-X)|}
< \infty \mbox{ unless $\rho=\E$},
   \end{align}
  One may naturally wonder whether 
\eqref{eq:thmain-2}   holds for other classes of risk measures, that is, whether the folding score is finite for non-linear risk measures. 
With a series of counter examples, we answer the question negatively for many classes of interesting risk measures.
It suffices to construct examples in $ L^\infty$, which we will work with in all examples below.

\begin{example}[Law-invariant convex risk measures]
Property \eqref{eq:thmain-2} does not hold for law-invariant convex risk measures in general, illustrated below.
The entropic risk measures, indexed by parameter $\beta >0$, are defined by
$$
\mathrm{ER}_\beta(X)  = \frac{1}{\beta }\log \E[\exp(\beta X)],~~~~~X\in L^1.
$$
The entropic risk measures are a popular class of law-invariant convex risk measures (\cite{FS16}). 
We will show that this class does not satisfy \eqref{eq:thmain-2}. 
Taking $X_\lambda = \lambda (2\id_A -1)$ where $\lambda>0$ and $A\in \mathcal F$ with $\p(A)=1/2$, we have 
$
\mathrm{ER}_\beta(|X_\lambda|) = \lambda 
$
and 
$$
\mathrm{ER}_\beta(X_\lambda) = \mathrm{ER}_\beta(-X_\lambda) =\frac{1}{\beta }\log \big(\frac 12 \exp(\beta\lambda) +\frac 12 \exp (-\beta\lambda)\big)\ge 0.
$$
Therefore, we have 
$$
\lim_{\lambda \downarrow 0}  \frac{\mathrm{ER}_\beta(X_\lambda) }{\mathrm{ER}_\beta(|X_\lambda|) } = \lim_{\lambda \downarrow 0} \frac{\frac{1}{\beta }\log (\frac 12 \exp(\beta\lambda) +\frac 12 \exp (-\beta\lambda))}{\lambda}
= \lim_{\lambda \downarrow 0} \frac{\log (\frac 12 \exp( \lambda) +\frac 12 \exp (- \lambda))}{\lambda}=0,
$$
where we used the l'Hospital rule and 
$$
\frac{\d }{\d \lambda}\log \big(\frac 12 \exp( \lambda) +\frac 12 \exp (- \lambda)\big) 
= \frac{\frac 12 \exp( \lambda) -\frac 12 \exp (- \lambda)}{\frac 12 \exp( \lambda) +\frac 12 \exp (- \lambda)} \to 0 \mbox{~as $\lambda\downarrow 0$}.
$$  
Therefore, $s_{\rho}=\infty$ for $\rho = \mathrm{ER}_\beta$.
\end{example}

A risk measure $\rho$ on $L^\infty$ is Fatou continuous if $\liminf_{n\to\infty} \rho(X_n)\ge \rho(X)$ for all uniformly bounded sequences $(X_n)_{n\in \N}$ with $X_n\to X$ in probability. Note that Fatou continuity is slightly weaker than $L^1$-lower semicontinuity (confined to $\mathcal L=L^\infty$) due to the type of convergence; see \citet[Chapter 7]{R13}.
\begin{example}[Coherent risk measures]
Property \eqref{eq:thmain-2} does not generalise to the class of Fatou-continuous coherent risk measures on $L^\infty$.
 This class of risk measures can be represented by  $$
  \rho = \sup_{Q\in \mathcal Q} \E^Q,
  $$
  where $\mathcal Q$ is a set of probability measures absolutely continuous with respect to $\p$.
Consider $[0,1]$ with the  Lebesgue measure $\lambda$.  Let $Q_1$ and $Q_2$ be defined by their Radon-Nikodym derivatives $$
\frac{\d Q_1 }{ \d \lambda}  = \frac 34\id_{[0,1/3]\cup [2/3,1]} + \frac 32\id_{(1/3,2/3)} \mbox{~~and~~}\frac{\d Q_2 }{ \d \lambda}  = \frac 32 \id_{[0,1/3]\cup [2/3,1]}.$$  
Let $X = \id_{[0,1/3]} - \id_{[2/3,1]} , \mathcal{Q} = \{Q_1, Q_2\},$ and $ \rho = \sup_{Q\in \mathcal Q} \E^Q.$ Then $s_\rho(X) = 1/0 = \infty$. 
\end{example}

\begin{example}[Coherent Choquet risk measures]
Property \eqref{eq:thmain-2} does not hold for the class of coherent Choquet risk measures.
A coherent Choquet risk measure has the form
 $$
   \rho_\nu(X) = \int_0^\infty  \nu (X>x)\d x + \int_{-\infty}^0 \big(\nu(X>x)-1\big)\d x, ~~~X\in L^\infty,
   $$
   where $\nu:\mathcal F\to [0,1]$  is increasing  and satisfies $\nu(N)=0$ for $N\in \mathcal F$ with $0$ probability, $\nu(\Omega)=1$,
   and $\nu(A\cup B) + \nu (A\cap B) \le \nu(A) +\nu(B)$ for all $A,B\in \mathcal F$.  If $\nu=h\circ \p$ for a concave $h\in \mathcal D$, then $\rho_\nu =\rho_h$. 
Consider $S = [-1/3, 1/3]\times [-1,1]$ with Lebesgue measure $\mu$.  Denote $S^+ = [-1/3, 1/3]\times [0,1]$ and $S^- = [-1/3, 1/3]\times [-1,0)$. Define $\nu(A) := (\mu(A \cap S^+)\wedge 1/2) + (\mu(A \cap S^-)\wedge 1/2)$. Take $X = \id_{S^+} - \id_{S^-}$. Then $s_{\rho_\nu}(X) = 1/0 = \infty$.
\end{example}

\begin{remark}\label{rem:1}
It remains an open question whether property \eqref{eq:thmain-2} holds for the class of law-invariant coherent risk measures. For this class, we did not find any example  of $\rho \ne \E$ satisfying $s_\rho=\infty$, 
although we could not prove $s_\rho<\infty$ for all $\rho\ne \E$.
By Kusuoka's representation (\cite{K01}),
any law-invariant coherent risk measure $\rho$ on $L^\infty$ can be represented as
$
\rho = \sup_{h \in \mathcal H_\rho } \rho_h
 $ 
where $\mathcal H_\rho$ is a set of concave distortion functions. 
If $\mathcal H_\rho $ is a finite set, 
then by using Theorem \ref{th:main} we can obtain \eqref{eq:thmain-2}.
For an infinite $\mathcal H_\rho$, this is not clear. 
\end{remark}


\begin{thebibliography}{} 

%



\bibitem[\protect\citeauthoryear{Artzner et al.}{Artzner et al.}{1999}]{ADEH99}
{Artzner, P., Delbaen, F., Eber, J.-M. and Heath, D.} (1999). Coherent measures of risk. \emph{Mathematical Finance}, \textbf{9}(3), 203--228.



\bibitem[\protect\citeauthoryear{Ausiello et al.}{Ausiello et al.}{2012}]{ACGKMP12}
Ausiello, G., Crescenzi, P., Gambosi, G., Kann, V., Marchetti-Spaccamela, A. and Protasi, M. (2012). \emph{Complexity and Approximation: Combinatorial Optimization Problems and their Approximability Properties}. Springer, Berlin. 


\bibitem[\protect\citeauthoryear{Basak and Shapiro}{2001}]{BS01} Basak, S. and Shapiro, A. (2001). Value-at-Risk based risk management: Optimal policies and asset prices. \emph{Review of Financial Studies}, \textbf{14}(2), 371--405.

\bibitem[\protect\citeauthoryear{{Bayraktar et al.}}{{Bayraktar et al.}}{2020}]{BDD20}
Bayraktar, E., Dolinskyi, L. and Dolinsky, Y. (2020). Extended weak convergence and utility maximisation with proportional transaction costs. \emph{Finance and Stochastics}, \textbf{24}(4), 1013--1034.

    \bibitem[\protect\citeauthoryear{{Basel Committee on Banking
  Supervision}}{{BCBS}}{2019}]{BASEL19}
{BCBS} (2019).
  {\em  Minimum Capital Requirements for Market Risk.  February 2019.}
 Basel Committee on Banking
  Supervision. Basel: Bank for International Settlements. Available at \texttt{https://www.bis.org/bcbs/publ/d457.htm} 
  


\bibitem[\protect\citeauthoryear{Bellini et al.}{2021}]{BKMS21} 
Bellini, F., Koch-Medina, P., Munari, C. and Svindland, G. (2021). Law-invariant functionals that collapse to the mean. \emph{Insurance: Mathematics and Economics}, \textbf{98}, 83--91.


\bibitem[\protect\citeauthoryear{Bilingsley}{1999}]{B99} {Billingsley, P.} (1999). \emph{Convergence of Probability Measures}, Second Edition, Wiley.


\bibitem[\protect\citeauthoryear{Blanchet et al.}{Blanchet et al.}{2022}]{BCZ22}
Blanchet, J., Chen, L. and Zhou, X. Y. (2022). Distributionally robust mean-variance portfolio selection with Wasserstein distances. \emph{Management Science}, \textbf{68}(9), 6382--6410.

  \bibitem[\protect\citeauthoryear{Castagnoli et al.}{Castagnoli
	et~al.}{2004}]{CMM04} 
Castagnoli, E., Maccheroni, F. and Marinacci, M. (2004). Choquet insurance pricing: a caveat. \emph{Mathematical Finance}, \textbf{14}(3), 481--485.


\bibitem[\protect\citeauthoryear{Chambers et al.}{Chambers
  et~al.}{2023}]{CLW23}
Chambers, C. P., Liu, P. and Wang, R. (2023).
 A duality between utility transforms and probability distortions. \emph{arXiv}: 2309.05816.



\bibitem[\protect\citeauthoryear{Cont, Deguest and Scandolo}{Cont
  et~al.}{2010}]{CDS10}
Cont, R., Deguest, R. and Scandolo, G. (2010).
 Robustness and sensitivity analysis of risk measurement procedures.
 {\em Quantitative Finance}, {\bf 10}(6), 593--606.

 
\bibitem[\protect\citeauthoryear{Embrechts et al.}{Embrechts et al.}{2022}]{ESW22} {Embrechts, P., Schied, A. and Wang, R.} (2022). Robustness in the optimization of risk measures. \emph{Operations Research}, \textbf{70}(1), 95--110. 

 
\bibitem[\protect\citeauthoryear{Durrett}{Durrett}{2019}]{D19}  Durrett, R. (2019). \emph{Probability: Theory and Examples}. Cambridge University Press. Fifth Edition.





 
\bibitem[\protect\citeauthoryear{Embrechts and Wang}{Embrechts and Wang}{2015}]{EW15} {Embrechts, P. and Wang, R.} (2015). Seven proofs for the subadditivity of Expected Shortfall.  \emph{Dependence Modeling}, \textbf{3}, 126--140.




\bibitem[\protect\citeauthoryear{Filipovi\'c and Svindland}{Filipovi\'c and Svindland}{2012}]{FS12}
Filipovi\'c, D. and Svindland, G. (2012). The canonical model space for law-invariant convex risk measures is $L^1$. \emph{Mathematical Finance}, \textbf{22}(3), 585--589.





\bibitem[\protect\citeauthoryear{F\"{o}llmer and Schied}{F\"{o}llmer and Schied}{2002}]{FS02} {F\"{o}llmer, H. and Schied, A.} (2002).
Convex measures of risk and trading constraints. \emph{Finance and Stochastics}, \textbf{6}(4), 429--447.

 \bibitem[\protect\citeauthoryear{F\"ollmer and Schied}{F\"ollmer and Schied}{2016}]{FS16} F\"ollmer, H.~and Schied, A.~(2016). \emph{Stochastic Finance: An Introduction in Discrete Time}. Fourth Edition.  {Walter de Gruyter, Berlin}.

 



\bibitem[\protect\citeauthoryear{Frittelli  and Rosazza Gianin}{Frittelli and Rosazza Gianin}{2002}]{FR02}
 Frittelli, M. and Rosazza Gianin, E. (2002). Putting order in risk measures. \emph{Journal of Banking and Finance}, \textbf{26},1473--1486.
 
 \bibitem[\protect\citeauthoryear{Fr\"ohlich and Williamson}{Fr\"ohlich and Williamson}{2024}]{FW22}
Fr\"ohlich, C. and  Williamson, R. C. (2024). Risk measures and upper probabilities: Coherence and stratification.  \emph{Journal of Machine Learning Research}, \textbf{25}(207), 1--100.

 


\bibitem[\protect\citeauthoryear{K\"allblad et al.}{2018}]{KOZ18}
K\"allblad, S., Ob\l \'oj, J. and Zariphopoulou, T. (2018). Dynamically consistent investment under model uncertainty: the robust forward criteria. \emph{Finance and Stochastics}, \textbf{22}(4), 879--918.


\bibitem[\protect\citeauthoryear{Kou and Peng}{2016}]{KP16}
{Kou, S. and Peng, X.} (2016). On the measurement of economic tail risk. \emph{Operations Research}, \textbf{64}(5), 1056--1072.

\bibitem[\protect\citeauthoryear{Kr\"atschmer, Schied and
  Z\"ahle}{Kr\"atschmer et~al.}{2014}]{KSZ14}
Kr\"atschmer, V., Schied, A. and ~Z\"ahle, H. (2014).
  Comparative and quantitiative robustness for law-invariant risk
  measures.
  \emph{Finance and Stochastics}, \textbf{18}(2), 271--295.

 

\bibitem[\protect\citeauthoryear{Kusuoka}{Kusuoka}{2001}]{K01}
{Kusuoka, S.} (2001). On law invariant coherent risk measures. \emph{Advances in Mathematical Economics}, \textbf{3}, 83--95.



\bibitem[\protect\citeauthoryear{Landers and Rogge}{Landers and Rogge}{1987}]{LR87} Landers, D. and  Rogge, L. (1987). Laws of large numbers for pairwise independent uniformly integrable random variables. \emph{Mathematische Nachrichten}, \textbf{130}(1), 189--192.




\bibitem[\protect\citeauthoryear{Liebrich and Munari}{Liebrich and Munari}{2022}]{LM22}
Liebrich, F. B. and Munari, C. (2022). Law-invariant functionals that collapse to the mean: Beyond convexity. \emph{Mathematics and Financial Economics}, \textbf{16}(3), 447--480.

\bibitem[\protect\citeauthoryear{Liebrich and Svindland}{Liebrich and Svindland}{2017}]{LS17}
Liebrich, F. B. and Svindland, G. (2017). Model spaces for risk measures. \emph{Insurance: Mathematics and Economics}, \textbf{77}, 150--165.



\bibitem[\protect\citeauthoryear{Madan and Cherny}{Madan and Cherny}{2010}]{MC10}
 Madan, D. B. and Cherny, A. (2010). Markets as a counterparty: An introduction to conic finance. \emph{International Journal of Theoretical and Applied Finance}, \textbf{13}(08), 1149--1177.
 

\bibitem[\protect\citeauthoryear{McNeil et al.}{McNeil et al.}{2015}]{MFE15}
{McNeil, A. J., Frey, R. and Embrechts, P.} (2015). \emph{Quantitative
Risk Management: Concepts, Techniques and Tools}. Revised Edition.  Princeton, NJ:
Princeton University Press.



\bibitem[\protect\citeauthoryear{Mu et al.}{2024}]{MPST24}
Mu, X., Pomatto, L., Strack, P. and Tamuz, O. (2024). Monotone additive statistics. \emph{Econometrica}, \textbf{92}(4), 995--1031. 

  

\bibitem[\protect\citeauthoryear{Meyer}{Meyer}{1966}]{M66} Meyer, P.A. (1966). \emph{Probability and Potentials}. Blaisdell Publishing, New York.  


\bibitem[\protect\citeauthoryear{Rockafellar and  Uryasev}{Rockafellar and  Uryasev}{2002}]{RU02}
Rockafellar, R. T. and Uryasev, S. (2002). Conditional value-at-risk for general loss distributions. \emph{Journal of Banking and Finance}, \textbf{26}(7), 1443--1471.



\bibitem[\protect\citeauthoryear{R{\"u}schendorf}{R{\"u}schendorf}{2013}]{R13}
R{\"u}schendorf, L. (2013).
  {\em Mathematical Risk Analysis. Dependence, Risk Bounds, Optimal
  Allocations and Portfolios}.
  Springer, Heidelberg.



\bibitem[\protect\citeauthoryear{Vazirani}{Vazirani}{2001}]{V01}
Vazirani, V. V. (2001). \emph{Approximation Algorithms}.
Springer, Berlin. 

\bibitem[\protect\citeauthoryear{Wang et al.}{Wang et al.}{2020}]{WWW20}
Wang, Q., Wang, R. and Wei, Y.  (2020). Distortion riskmetrics on general spaces. \emph{ASTIN Bulletin}, \textbf{50}(4), 827--851. 


\bibitem[\protect\citeauthoryear{Wang and Zitikis}{2021}]{WZ21}
 Wang, R. and Zitikis, R. (2021). An axiomatic foundation for the Expected Shortfall. \emph{Management Science}, \textbf{67}, 1413--1429.
 
\bibitem[\protect\citeauthoryear{Wu et al.}{Wu et al.}{2024}]{wmh20} Wu, Q., Mao, T. and Hu, T. (2024). Generalized optimized certainty equivalent with applications in the rank-dependent utility model. \emph{SIAM Journal on Financial Mathematics}, \textbf{15}(1), 55--294. 



\bibitem[\protect\citeauthoryear{Yaari}{Yaari}{1987}]{Y87}
{Yaari, M. E.} (1987). The dual theory of choice under risk. \emph{Econometrica}, \textbf{55}(1), 95--115.










\end{thebibliography}
\end{document}